\newtheorem{theorem}{Theorem}[section]
\newtheorem{lemma}[theorem]{Lemma}
\newtheorem{corollary}[theorem]{Corollary}
\newtheorem{definition}{Definition}
\newtheorem{observation}{Observation}
\newcommand{\sq}{\hbox{\rlap{$\sqcap$}$\sqcup$}}
\newcommand{\qed}{\hspace*{\fill}\sq}
\newenvironment{proof}{\noindent {\bf Proof.}\ }{\qed\par\vskip 4mm\par}
\newenvironment{proofs}{\noindent {\bf Proof (sketch).}\ }{\qed\par\vskip 4mm\par}
\newcommand{\cE}{{\cal E}}
\newcommand{\cO}{{\cal O}}
\newcommand{\mN}{\mathbb{N}}
\newcommand{\bQ}{\mathbf{Q}}
\newcommand{\cU}{{\cal U}}
\newcommand{\bB}{\mathbf{B}}
\newcommand{\dist}{{\rm dist}}
\newcommand{\queue}{{\rm queue}}
\newcommand{\cancel}{{\rm cancel}}
\title{Distributed Queuing in Dynamic Networks}
\author{Gokarna Sharma
\institute{School of Electrical Engineering and Computer Science\\
Louisiana State University\\
Baton Rouge, LA 70803, USA}
\email{gokarna@csc.lsu.edu}
\and
Costas Busch
\institute{School of Electrical Engineering and Computer Science\\
Louisiana State University\\
Baton Rouge, LA 70803, USA}
\email{busch@csc.lsu.edu}
}
\begin{document}
\maketitle

\begin{abstract}
We consider the problem of forming a distributed queue in the adversarial dynamic network model of Kuhn, Lynch, and Oshman (STOC 2010) in which the network topology changes from round to round but the network stays connected. This is a synchronous model in which network nodes are assumed to be fixed, the communication links for each round are chosen by an adversary, and nodes do not know who their neighbors are for the current round before they broadcast their messages. 
Queue requests may arrive over rounds at arbitrary nodes and the goal is to eventually enqueue them in a distributed queue. 
We present two algorithms that give a total distributed ordering of queue requests in this model. We measure the performance of our algorithms through {\em round complexity}, which is the total number of rounds needed to solve the distributed queuing problem. We show that in 1-interval connected graphs, where the communication links change arbitrarily between every round, it is possible to solve the distributed queueing problem in $O(nk)$ rounds using $O(\log n)$ size messages, where $n$ is the number of nodes in the network and $k\leq n$ is the number of queue requests. Further, we show that for more stable graphs, e.g. $T$-interval connected graphs where the communication links change in every $T$ rounds, the distributed queuing problem can be solved in $O\left(n+ \frac{nk}{\min\{\alpha,T\}}\right)$ rounds using the same $O(\log n)$ size messages, where $\alpha > 0$ is the concurrency level parameter that captures the minimum number of active \queue\ requests in the system in any round.
These results hold in any arbitrary (sequential, one-shot concurrent, or dynamic) arrival of $k$ \queue\ requests in the system.
Moreover, our algorithms ensure correctness in the sense that each queue request is eventually enqueued in the distributed queue after it is issued and each queue request is enqueued exactly once. We also provide an impossibility result for this distributed queuing problem in this model. To the best of our knowledge, these are the first solutions to the distributed queuing problem in adversarial dynamic networks.
\end{abstract}
%{\bf Keywords}--Distributed algorithms; Distributed queuing; Dynamic networks; Interval connectivity; Execution; Total ordering; Round complexity.

\section{Introduction}
\label{section:introduction}

Many distributed systems rely on some concept of mobile objects.
%For an object to be
%mobile, we require that it can be transmitted over a network from one node
%to another. 
A mobile object lives on only one node of the network at a time and it moves from
one node to another in response to explicit requests by other nodes. %, which we call {\em users}.
A tracking
mechanism, commonly known as a distributed directory, allows nodes to keep track of mobile objects by providing
the ability to locate the objects and also the ability to ensure consistency of the objects in concurrent situations \cite{Demmer1998}. These directories are interesting in the sense that they provide the controlled way of sending the mobile object to the requesting nodes without flooding the object information to the whole network.

This mobile object tracking problem has been extensively studied in the literature for
various coordination problems that arise in a distributed setting.
For example, authors in \cite{Naimi1996,Raymond1989} studied this problem in the context of distributed mutual exclusion. The node which has the token (or the shared object) can enter the critical section in their problem setting.
Later, Demmer and Herlihy \cite{Demmer1998} studied this problem in the context of distributed directories.
Awerbuch and Peleg  \cite{Awerbuch1990} studied this problem in the context of tracking a mobile user in sensor networks. Recently, these papers \cite{Herlihy2007,Sharma2012,Sharma2012b,Attiya2010,Zhang2009} studied this problem for distributed transactional memories.
In these applications, the concept of path reversal - {\em when a node receives a message,
it flips its edge to point to the node from which the request was received} - is used.
Path reversal approaches are evolved from the trail of forwarding pointers approach of \cite{Li1989} studied for memory coherence in virtual shared memory systems.

The very common feature of the aforementioned solutions \cite{Demmer1998,Naimi1996,Raymond1989,Awerbuch1990,Herlihy2007,Sharma2012,Sharma2012b,Attiya2010,Zhang2009} is that they essentially form some short of a {\em distributed queue} by which processes (i.e. nodes or vertices) that issued operations for a shared object can be organized in a total order and each processor receives the identity of its predecessor operation in that total order \cite{Demmer1998,Herlihy2006,Herlihy2001}. A distributed queue approach is appealing because it ensures that no single node becomes a synchronization bottleneck \cite{Demmer1998,Naimi1996,Raymond1989}.
However, these previous solutions
assumed the static network such that a pre-selected spanning tree \cite{Demmer1998,Naimi1996,Raymond1989,Zhang2009} or a hierarchical directory \cite{Awerbuch1990,Herlihy2007,Sharma2012,Sharma2012b,Attiya2010} can be embedded on top of the graph. It is yet to know whether it is possible to come up with efficient solutions to the queuing problem in dynamic graphs. This is because
when the network topology changes frequently, there might be a significant overhead on adapting the commonly used structures (tree or hierarchy) accordingly to
cope up with the changes. 

Therefore,
we initiate the study of the distributed queuing problem in situations where the underlying network graph changes frequently such that a static structure can not be efficiently maintained.
To model frequent changes,
we consider the adversarial dynamic network model first studied by Kuhn, Lunch, and Oshman \cite{Kuhn2010}. This is a synchronous model in which time is divided into rounds, and in each round, the communication network is a graph chosen by an adversary over a vertex set. The vertex set is assumed to be fixed throughout the execution. The communication graph is also assumed to be connected but it can change completely from one round to the next, i.e. the network topology changes from round to round. Nodes communicate by broadcasting messages to their immediate neighbors.
The adversary is very strong in the sense that nodes do not know who their neighbors are for the current round before they broadcast their messages.
This model is appealing in the sense that it captures widely-used mobile and wireless networks where communication can be unpredictable (see \cite{ODell2005,Kuhn2011b} for details). Our main objective in this present work
is to understand the complexity of the distributed queuing problem in
this adversarial dynamic network model.

\paragraph{Contributions:} Assume that there are $n$ nodes in the network and $k\leq n$ nodes issue a queue request each which %and these queue requests
must be ordered in such a way that each requesting node receives the identity of its predecessor node in a total distributed order.
We derive an impossibility result showing that this distributed queuing problem can not be solved without {\queue} request replication in adversarial dynamic networks.
We then give two simple algorithms for this problem, one for frequently changing graphs and the other for more stable graphs, assuming that the adversary satisfies {\em $T$-interval connectivity}: there must exist a connected spanning subgraph that stably persists throughout $T$ consecutive rounds. To be more clear, $T$-interval connected graphs are those graphs in which for any consecutive interval of $T$ rounds, the maximal common subgraph of the graphs in these rounds is connected. The communication is %assumed to be
limited to $O(\log n)$ bits per message. %, where $n$ is the number of nodes in the network.

%We first give an algorithm for the dthrouistributed queuing problem in static networks based on token dissemination and and prove that $\Theta(n+k)$ rounds are necessary and sufficient to solve this problem. After that we give algorithms for this problem in dynamic networks.

We measure the performance of our algorithms through {\em round complexity}, which is the total number of rounds needed to solve the queuing problem. Our goal is to minimize the total number of rounds needed in solving the queuing problem.
We show that in 1-interval connected graphs, where the communication links change arbitrarily between rounds, our algorithm needs $O(nk)$ rounds to solve the queueing problem. %, where %$n$ is the number of nodes in the network and
%$k\leq n$ are the number of nodes that issue the queue requests.
Further, we show that in more stable graphs, e.g. $T$-interval connected graphs where the communication links change in every known $T>1$ rounds, our algorithm needs
$O\left(n+ \frac{nk}{\min\{\alpha,T\}}\right)$ rounds to solve the queuing problem, %using the same $O(\log n)$ size messages,
where $\alpha>0$ is a concurrency level parameter that captures the minimum number of {\em active} (initiated but not yet enqueued)  \queue\ requests in the system in any round.
These bounds hold in all three cases of sequential, (one-shot) concurrent, and dynamic execution of $k$ {\queue} requests. A sequential execution consists of non-overlapping sequence of {\queue} operations, % (i.e., $\alpha = 1$),
whereas a set of {\queue} requests are initiated simultaneously in a concurrent execution. % (i.e., $T\leq \alpha \leq  k$).
For dynamic executions, we consider a window of time such that an arbitrary set of bounded $k$ {\queue} requests are assumed to be initiated at arbitrary moments of time within that window.
% by the nodes in the network
%(i.e., $1 \leq \alpha \leq k$).
Therefore, sequential and concurrent executions are the special cases of dynamic executions.
Let us denote by {\em cycle} the window of $O(n)$ consecutive rounds and by $\beta_i$ the number of  active queue requests in the beginning of cycle $i$. The value of $\beta_i$ may be different from cycle to cycle depending on the execution, however  $1 \leq \beta_i \leq k$ holds for every cycle $i$ in any execution. Therefore, $\alpha$ that appears in the bound $O\left(n+ \frac{nk}{\min\{\alpha,T\}}\right)$ is essentially the smallest value of $\beta_i$ in any cycle $i$, %of the execution,
i.e., $\alpha := \min\{\beta_1,\beta_2,\ldots\}$. %, where $\beta_i$ is the value of $\beta$ in the beginning of cycle $i$.
%
%
%
%
%Therefore, sequential and concurrent executions are the special cases of dynamic executions. % reduces to a sequential execution when the arrival time difference between every two requests become very high and reduces to a concurrent execution when the difference becomes zero.
%Note that $\alpha$ that appears in the bound $O\left(n+ \frac{nk}{\min\{\alpha,T\}}\right)$ is the smallest value of $\alpha$ in any round of the execution.
This bound is interesting in the sense that it shows that the performance speed up can only be obtained in $T$-interval connected graphs for the distributed queuing problem when $\alpha \approx T$ throughout the execution. %Moreover, we prove a matching lower bound of $\Omega(nk)$ for any queuing algorithm  for the sequential execution in both $1$- and $T$-interval connected graphs, showing that stability does not necessarily speed up the distributed queuing problem.
%We also show that our algorithms ensure starvation-freedom (see Section \ref{section:limitation} for details) and each queue request is enqueued to the distributed queue exactly once, even in dynamic executions.

Our results also extend to dynamic executions with continuous arrival of \queue\ requests from nodes over time (i.e., $k\rightarrow \infty$) where we show that, if $\beta_i$ are the active \queue\ requests in the beginning of any cycle $i$, then our algorithms guarantee that they will be enqueued within next $O(n\beta_i)$ rounds in 1-interval connected graphs, and within next $O\left(n+ \frac{n\beta_i}{\min\{\beta_i,T\}}\right)$ rounds in $T$-interval connected graphs.
Moreover, our algorithms ensure correctness in any execution (see Section \ref{section:limitation} for details) in the sense that each queue request is eventually enqueued in the distributed queue after it is issued and each queue request is enqueued exactly once.
To our best knowledge, these are the first solutions to the distributed queuing problem in adversarial dynamic networks.

%Moreover, we derive an impossibility result showing that the distributed queuing problem can not be solved without {\queue} message replication in adversarial dynamic networks.

Our bounds suggest that the queuing problem needs as much as the number of rounds needed to solve the {\em counting} problem\footnote{In the counting problem, assuming that nodes do not know $n$ in the initial state, every node in the dynamic graph comprising $n$ nodes should know $n$ after some rounds of message exchange \cite{Kuhn2010}.} and the {\em $k$-token dissemination} problem\footnote{In the $k$-token dissemination problem, there are $k$ unique tokens, usually in $k$ different nodes of the network, and the goal is to transmit these tokens to all the nodes in the network \cite{Kuhn2010}.}
in dynamic networks, in the worst-case. It is shown that $O(n^2)$ rounds are sufficient \cite{Kuhn2010} and $\Omega(n^2/\log n)$ rounds are necessary \cite{Dutta2013} to solve the counting and all-to-all token dissemination problems.
The complexity arises in adversarial dynamic networks due to the fact that the communication graph changes in every round. Therefore, even in the case of distributed queuing, a queue request may need to visit all the rest $n-2$ nodes before finding the predecessor node, which takes $n-1$ rounds as the communication in each round is controlled by the worst-case adversary. Someone may say that the distributed queuing problem can be solved by first solving the $k$-token dissemination problem and then making the predecessor a node with ID that is immediately smaller than that of a queue request issuing node
%that issued the queue request
for every node that issued the queue request. However, this approach only solves the queuing problem in the case of an one-shot concurrent execution (and does not solve the problem in sequential and dynamic executions).

\paragraph{Related Work:}
The distributed queuing problem has been studied extensively in the literature assuming a static network \cite{Demmer1998,Naimi1996,Raymond1989,Awerbuch1990,Herlihy2007,Sharma2012,Sharma2012b,Attiya2010,Zhang2009}.
To solve this problem, either the pre-selected spanning tree as used in \cite{Demmer1998,Naimi1996,Raymond1989,Attiya2010,Zhang2009} or the  hierarchical structure as used in \cite{Awerbuch1990,Herlihy2007,Sharma2012,Sharma2012b} is constructed on top of the static network.
These ideas were based on some well-known spanning tree and clustering
techniques (e.g., minimum spanning tree \cite{Cormen2009}, sparse covers \cite{Awerbuch1990}, maximal independent sets \cite{Luby1985}) which organize the nodes in the network in some useful way to facilitate efficient coordination. These papers \cite{Herlihy2001,Herlihy2007,Attiya2010,Sharma2012,Sharma2012b}
studied the distributed queuing problem in the concurrent execution setting, and these papers \cite{Sharma2013,Herlihy2006} considered dynamic executions. Moreover, the self-stabilizing version of the distributed queuing problem was studied by Tirthapura and Herlihy \cite{Tirthapura2006}. This self-stabilizing version is also not inherently dynamic as the eventual stabilization of the network is assumed, i.e., the network stabilizes and stops changing after a finite time.
These approaches, e.g. \cite{Demmer1998,Naimi1996,Raymond1989,Awerbuch1990,Herlihy2007,Sharma2012}, %for the distributed queuing problem in static networks
used latency as the cost metric, i.e., the cost %of any queuing algorithm
is
measured through the total latency, which is the sum
of the latencies of %all the
individual queuing requests. %The reason for using the above metric for latency is as follows. In many applications,
%the only knowledge needed about the distributed queue is the identity of the successor (or the predecessor)
%of a node's request.
However, in dynamic networks, the problem is to figure out how many rounds of message broadcasts are required to solve the distributed queuing problem. %whether queue problem can be solved in less number of rounds.

The adversarial dynamic network model was proposed in the seminal paper of Kuhn, Lynch, and Oshman \cite{Kuhn2010}. The authors studied the complexity of counting and token dissemination problems. Subsequently,
there have been a significant interest in solving many distributed coordination problems in this model as it makes very few assumptions about the behavior of the network, such that the properties of the highly dynamic large scale mobile and sensor networks can be captured. %For example,
Kuhn et al.~\cite{Kuhn2011} studied the
problem of coordinated consensus in this model. Recently, these papers \cite{Haeupler2012,Dutta2013} improved  and extended some of the  %information dissemination lower bounds
results  presented in \cite{Kuhn2010}. Moreover, Haeupler and Karger,  in \cite{Haeupler2011}, studied how to use network coding to expedite the information dissemination in this model. We direct readers to \cite{Kuhn2011b} for the state-of-the-art up to the end of 2010.

%\vspace{-4mm}
\paragraph{Outline of Paper:} The rest of the paper is organized as follows. In Section \ref{section:preliminaries}, we formally present the adversarial dynamic network model and define the distributed queuing problem.
We give a very simple impossibility result in Section \ref{section:impossibility}. We then present and analyze a queuing algorithm for frequently changing graphs in Section \ref{section:frequent}. % and perform its correctness, progress, and complexity analysis.
We do the same for more stable graphs in Section \ref{section:stable}. We then discuss an inherent limitation %in obtaining fast algorithms for this problem
in Section \ref{section:limitation} and conclude with a short discussion in Section \ref{section:discussion}. %Proofs are deferred to Appendix due to space restrictions.

\section{Preliminaries}
\label{section:preliminaries}
%\vspace{-3mm}
\subsection{Dynamic Network Model}
We formally present the dynamic network model, originally introduced by Kuhn, Lynch, and Oshman \cite{Kuhn2010}.  %, in which the set of nodes is known a priori. Moreover we assume that the set of nodes are fixed during the execution.
This model works on a synchronous round based execution.
A dynamic network is represented as a connected graph $G=(V,E)$, where $|V|=n$. We assume that $n$ is known to the nodes of $G$. If $n$ is not known, an existing counting algorithm, e.g. \cite{Kuhn2010}, can be used to find $n$ spending $O(n^2)$ rounds.  This is not a much overhead as counting is needed only once, whereas queuing is an ongoing service.
Each vertex of $G$ models a node, and each edge a two-way reliable communication link. Moreover, each node has a unique identifier (UID) drawn from a namespace $\cU$. These identifers have $O(\log n)$ bits, so that they fit in a message.
Each node can send messages directly to its neighbors and indirectly to non-neighbors along a path. Each edge has same weight and sending a message from one node to its neighbor node needs a single round. It is assumed that every message is eventually delivered (i.e. no message loss occurs).

This model assumes that nodes share a common global clock that starts at 0 and advances in unit steps. The communication is done in synchronous rounds as follows \cite{Kuhn2010}: The round $r$ starts as soon as round $r-1$ finishes. The time between time $r-1$ and time $r$ is assumed to be the round $r$ and the following execution happens in each round $r$. First, each node generates a single message to broadcast based on its local state at time $r-1$. The adversary then provides connected communication graph (i.e., a set of edges) for round $r$. Each node then delivers its message to it's neighbors following the edges given by the adversary. The assumption of connected communication graph is each round is the only constraint on the adversary. After messages are delivered to the neighbors, each node processes the messages it received, and transits to a new state (its state at time $r$). Then, the next round begins. The communication is assume to be limited to $O(\log n)$ bits per message.

The adversary is actually a {\em strong adaptive adversary} in the sense that it can decide the network $G(r)$ of round $r$ based on the complete history of the network up to round $r-1$ as well as on the messages the nodes will send in round $r$. %This adversary is stronger than the more typical adaptive adversary where the graph $G(r)$ of round $r$ is independent of the random choice that the nodes make in round $r$.
Formally, the adversary's behavior in a given execution is captured by dynamic graph $G=(V,E)$, where $V$ is a static set of nodes and $E: \mN \rightarrow \{\{u,v\}|u,v\in V\}$ is a function that maps a round number $r\in \mN$ to a set of undirected edges $E(r)$. $\dist(u,v)$ is used to denote the minimum hop distance between nodes $u,v\in G$ in the dynamic subgraph $G(r)$ at round $r$. $G$ satisfies the following property. %we have the following. %we assume that the following definition is satisfied by the dynamic graph.
%We use $x_u(r)$ to denote the value of node $u$'s local variable $x$ at the beginning of round $r$.

%\vspace{-3mm}
\begin{definition}[\cite{Kuhn2010}]
\label{definition:t-interval}
A dynamic graph $G=(V,E)$ is said to be {\em $T$-interval connected} for any $T\geq 1$ if for all $r\in \mN$, the static graph $G_{r,T}:=\left(V,\bigcap_{i=r}^{r+T-1} E(r)\right)$ is connected. The graph is said to be {\em $\infty$-interval connected} if there is a connected static subgraph $G'= (V,E')$ such that for all $r\in \mN$, $E'\subseteq E(r)$.
\end{definition}
%\vspace{-3mm}

A dynamic graph $G=(V,E)$ in this model  induces a {\em casual order}, denoted $(u,r) \rightsquigarrow_G (v,r')$, which means that node $u$'s state in round $r$ influences node $v$'s state in round $r'$. %Informally, the causal order signifies that a node $u$'s state at round $r$ can only influence another node $v$'s state at round $r'$ in a given execution if there is a chain of messages starting from $u$ at round $r$ and ending at $v$
%at round $r'$.
The casual order is a transitive and reflexive closure of the order $(u,r) \rightarrow_{G} (v,r+1)$, which holds if and only if either $u=v$ or $\{u,v\} \in E(r+1)$. Therefore at round $r$, node $u$ has direct information about the states of node $v$ at round $r'$ such that $(v,r') \rightsquigarrow_{G} (u,r)$. %We omit the subscript $G$ from $\rightsquigarrow_{G}$ when the context is clear.
The following lemma shows that the number of nodes that have influenced a node $u$ grows by at least one in every round, which is a very important property for this model.
%\vspace{-3mm}
\begin{lemma}[\cite{Kuhn2010}]
\label{lemma:unecut-more}
For any node $u\in V$ and round $r\geq 0$, $|\{v\in V : (u,0) \rightsquigarrow (v,r)\}|\geq \min\{r+1,n\}$ and $|\{u\in V : (v,0) \rightsquigarrow (u,r)\}|\geq \min\{r+1,n\}$.
\end{lemma}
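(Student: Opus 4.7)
The plan is to establish the forward direction (first inequality) by induction on $r$, then derive the backward direction (second inequality) by a time-reversal argument that reduces it to the forward case.

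For the forward direction, fix a node $u$ and let $S_r := \{v \in V : (u,0) \rightsquigarrow_G (v,r)\}$. The base case $r = 0$ gives $S_0 = \{u\}$ by reflexivity of the casual order, so $|S_0| = 1 = \min\{1,n\}$. For the inductive step, suppose $|S_r| \geq \min\{r+1, n\}$; if $S_r = V$ we are done, so assume $S_r \subsetneq V$. Since every snapshot of a $T$-interval connected graph with $T \geq 1$ is connected, $G(r+1)$ is connected, and therefore there must exist at least one edge $\{v,w\} \in E(r+1)$ crossing the cut, i.e., with $v \in S_r$ and $w \in V \setminus S_r$. Chaining $(u,0) \rightsquigarrow_G (v,r)$ with $(v,r) \rightarrow_G (w, r+1)$ yields $w \in S_{r+1}$. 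Combined with $S_r \subseteq S_{r+1}$, which follows from the self-loop case $(x,r)\rightarrow_G (x,r+1)$ built into the definition of the one-step relation, this gives $|S_{r+1}| \geq |S_r| + 1 \geq \min\{r+2, n\}$, closing the induction.

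For the backward direction, fix $u$ and a horizon $r$ and set $T_r := \{v \in V : (v,0) \rightsquigarrow_G (u,r)\}$. Construct the time-reversed dynamic graph $\bar G$ on the same vertex set by $\bar E(s) := E(r - s + 1)$ for $s \in \{1, \ldots, r\}$. Every snapshot of $\bar G$ is a snapshot of $G$, hence connected, so $\bar G$ is again $1$-interval connected. Unwinding the definitions of the one-step relation and its transitive-reflexive closure shows that $(v, 0) \rightsquigarrow_G (u, r)$ holds if and only if $(u, 0) \rightsquigarrow_{\bar G} (v, r)$. Consequently $T_r$ for $G$ coincides with $S_r(u)$ computed in $\bar G$, and applying the forward bound to $\bar G$ delivers $|T_r| \geq \min\{r+1, n\}$.

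The one non-mechanical step is recognizing that connectivity of a single snapshot is enough to force at least one new node into $S_r$ each round, which is the whole source of the $+1$-per-round growth; everything else is unpacking definitions and a routine induction. The backward case needs no new idea beyond observing that $1$-interval connectivity is preserved under reversing the sequence of snapshots, so the hardest part is really just being careful with the bookkeeping of the casual order under time reversal.
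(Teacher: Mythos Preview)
Your proof is correct. Note, however, that the paper does not actually prove this lemma: it is quoted verbatim from \cite{Kuhn2010} and stated without proof, so there is no in-paper argument to compare against. Your induction on $r$ using a cut edge in the connected snapshot $G(r+1)$ is exactly the standard argument, and the time-reversal reduction for the second inequality is a clean way to avoid repeating the induction (one could equally well run the same cut argument directly on the sets $\{v:(v,r-s)\rightsquigarrow(u,r)\}$ for $s=0,\ldots,r$, but your route is just as short).
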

%\vspace{-7mm}

\subsection{Distributed Queuing Problem}
%We denote the distributed queue by $\bQ$ which is a set of nodes, where the first node in $\bQ$ in the head of the queue and the last node in the queue in the tail of the queue. Initially, $\bQ$ has only one node that is both the head and the tail of the queue. As requests from different nodes join the queue, the queue gives the total ordering of the requests in the queue. $\bQ$ should not be explicitly known to all the nodes in the system. A predecessor node in the queue stores only the id of the next node in the queue which is a successor node for $\bQ$. Therefore, by visiting the successor nodes of all the nodes starting from the head of the queue provides the global order.
We denote a distributed queue $\bQ =(h,g,\ldots,t)$ by an UID set of $|\bQ|$ nodes, where the first node $h\in \bQ$ is the head of the queue and the last node $t\in \bQ$ is the tail of the queue.
Initially, there is only one node in $\bQ$ which acts as both the head and the tail of the queue; the tail changes when other requests change the tail of the queue by becoming the successor. For example, $g$ is the {\em successor} node of $h$ and $h$ is the {\em predecessor} node of $g$ in $\bQ$.  $\bQ$ is not explicitly known to all the nodes in the system and is maintained implicitly by the nodes. A predecessor node stores only the UID of its successor node in the queue. Therefore, by visiting the successor nodes of all the nodes in $\bQ$ starting from its head provides the total distributed queue order.

An instance of the distributed queuing problem consists of a
set $\cE=\{q_1,q_2,\ldots,q_k\}$ of $k$ {\queue} requests which want to join $\bQ$.
An algorithm solves the queuing problem if for all instances $\cE$, when the algorithm is executed in any dynamic graph $G=(V,E)$, all {\queue} requests are eventually organized one after another providing a total distributed order. Each queue request $q_i$ has a source node $s_i$, which is the node that issued this request, and a destination node $t_i$, which is its predecessor node in the queue. In the distributed queuing problem, the source node of the predecessor
request $q_i$ in the total order is the destination node for the successor request
$q_{i+1}$, i.e., the destination node for each request is not known beforehand and
the distributed queuing algorithm should find out the destination node online while in execution. The purpose of any queuing algorithm is to provide the total order.

We denote a {\queue} request $q\in \cE$ by
the tuple $q = (r, u)$, where $r \geq 0$ is the time when the {\queue} request is initiated and $u$ is the node that initiates
it (i.e., the requesting node). Therefore, we denote by
$\cE= \{q_1 = (r_1, v_1), q_2 = (r_2, v_2), \ldots, q_k = (r_k, v_k)\}$ the arbitrary set of $k$ dynamic
{\queue} requests, where the requests $r_i \in \cE$ are indexed according to their initiation
times, i.e. $i < j \Longrightarrow r_i \leq  r_j$.
We also consider sequential and concurrent (one-shot) execution of these {\queue} requests.
In a sequential case, the requests in $\cE$ have initiation times such that they provide a
non-overlapping sequence of $k$ {\queue} operations, i.e., a next request will be issued
only after the current request finishes. In one-shot concurrent case, the requests in $\cE$ have same initiation times
such that all $k$ {\queue} requests come to the system at the same time. % and no further requests occur.

\section{An Impossibility Result}
\label{section:impossibility}

We prove a very simple impossibility result for the distributed queuing problem showing the power of the adversary in the dynamic graph model. 
We mean by queue request replication that when a node receives a queue request from some other node, it stores a copy in it before forwarding that queue request to its neighbors.
This theorem shows that {\queue} request replication in network nodes is necessary to solve the distributed queueing problem in the adversarial dynamic network model.

\begin{theorem}
\label{theorem:impossibility}
The distributed queuing problem is impossible to solve in 1-interval connected graphs against a strong adversary without {\queue} request replication.
\end{theorem}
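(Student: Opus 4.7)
The plan is to exhibit an explicit adversarial schedule on a small fixed network that traps the only copy of a queue request away from its designated predecessor forever. Under the no-replication hypothesis I treat the request $q$ as a unique token: at each round there is exactly one node whose state records $q$, and when the holder's broadcast is used to forward $q$ the sender relinquishes its copy (otherwise the token simply stays). The predecessor node cannot register a successor without the token information ever reaching it, so the adversary wins as soon as it can keep the token and the predecessor permanently non-adjacent.

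Concretely I would specialize to $n=3$ nodes $a,b,c$ with $a$ as the sole initial member of $\bQ$, so that $a$ is the intended predecessor of the next enqueued request. Let $b$ issue the single queue request $q$ at round $r_0$; the token starts at $b$. The adversary fixes $E(r)$ based on the identity of the current token-holder $u_r\in\{b,c\}$: if $u_r=b$, choose the path $E(r)=\{\{a,c\},\{b,c\}\}$; if $u_r=c$, choose $E(r)=\{\{a,b\},\{b,c\}\}$. Each such $G(r)$ is connected, so 1-interval connectivity holds, and in both cases $a$ is not a neighbor of $u_r$. A one-line induction then shows $u_{r+1}\in\{b,c\}$ for every $r\geq r_0$: the only neighbor of $u_r$ in $G(r)$ is the other element of $\{b,c\}$, so a forwarded token can only move between $b$ and $c$, and if the algorithm chooses not to forward, $u_r$ is unchanged. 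Meanwhile $a$'s unique neighbor in each round is a node whose state at time $r-1$ contains no copy of $q$ (by the no-replication rule, only $u_{r-1}$ carries $q$, and by the invariant $u_{r-1}\neq a$ and $u_{r-1}$ is on the far side of the path from $a$), so the messages $a$ receives never mention $q$ and $a$ can never record $b$ as its successor.

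The main obstacle I anticipate is pinning down exactly what ``without replication'' forbids; a pedantic reading might allow the issuer $b$, which always knows it initiated $q$, to piggyback metadata about $q$ onto its broadcasts that some middle node could relay to $a$ over $n-1$ rounds by Lemma~\ref{lemma:unecut-more}. I would handle this by reading the paper's definition literally: replication means that a receiving node stores a copy of the request before forwarding, so without replication the actual request object $q$ lives at one node at a time, and the enqueuing action at the predecessor requires that object (not merely metadata) to reach it so that the predecessor can store its successor's UID. The adversarial schedule above keeps $q$ permanently inside $\{b,c\}$, so no algorithm that respects the no-replication rule can solve the queuing problem against a strong adversary in 1-interval connected graphs.
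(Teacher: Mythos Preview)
Your proof is correct and follows essentially the same strategy as the paper's: both trap the single copy of the queue request between two non-predecessor nodes by having the adversary give the current token-holder exactly one edge, namely to the other trapped node, so that the request can never reach the tail of $\bQ$. Your explicit $n=3$ path construction with the invariant $u_r\in\{b,c\}$ makes precise what the paper's proof sketch leaves implicit, and your discussion of the metadata subtlety is a reasonable addition that the paper glosses over.
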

\begin{proofs}
We prove this theorem similar to the impossibility proof for token dissemination given in \cite{Kuhn2010}. Consider a distributed queuing problem. Assume that, initially, there exist a head node in $\bQ$, say at node $v$ (the $head$ node). This node is also the tail of $\bQ$. The node $v$ has a local variable $succ_{v}$ which is initialized to $\bot$ (i.e.,  $succ_{head}=\bot$) to imply that there is no successor of the $head$ node in $\bQ$.
Assume also that each node $w \in G$ has a local Boolean variable $queue_w$ to represent that it has a {\queue} request, denoted by $queue(w)$. $queue_w$ is initially zero, and if $queue_w=1$ for some node $w$ then $w$ is said to ``join the queue''.
Lets consider the case where some node $w \in G, w\neq v$, wants to join $\bQ$, i.e., $queue_w=1$. To join $\bQ$, node $w$ needs to sends its queue request message $queue(w)$ to one of its neighbors.
In every round exactly one node in the network has the $queue(w)$ message, and it can either keep the $queue(w)$ message or pass the $queue(w)$ message to one of its neighbors. The goal is for a  predecessor node (in this proof the node $v$) to eventually have the $queue(w)$ message in some round. This problem is impossible to solve in 1-interval connected graphs. This is because as the adversary we considered has the knowledge of which node $x$ has the $queue(w)$ message, it can provide that node $x$ with only one edge $\{w,x\}$ such that $x$ is not the predecessor node for $queue(w)$. Node $x$ then has no choice except to communicate with node $w$. After $x$ receives the $queue(w)$ message, the adversary can turn around and remove all of $x$'s edges except $\{x,w\}$, so that $x$ has no choice except to pass the $queue(w)$ message back to $w$, which is the node that issued $queue(w)$. In this way the  adversary can prevent the $queue(w)$ message from ever visiting any node except $w,x$ for the \queue\ request issued by $w$.
\end{proofs}

\section{Queuing in Frequently Changing Graphs}
\label{section:frequent}

We present and formally
analyze a simple algorithm (see Algorithm \ref{algorithm:queue1interval}) which solves the distributed queuing problem in 1-interval connected graphs. Recall that the network topology changes in every round in 1-interval connected graphs. %We first present the algorithm in Section \ref{subsection:queuing1interval} and analyze it for its correctness, progress, and performance guarantees in Section \ref{subsection:analysis1interval}.
This algorithm is a simple extension to the token dissemination algorithm of \cite{Kuhn2010}; recall that the algorithm of \cite{Kuhn2010} solves the queuing problem only in (one-shot) concurrent situations.
This algorithm is suitable for all sequential, concurrent (one-shot), and dynamic execution of {\queue} requests (see Section \ref{section:introduction}).
Algorithm \ref{algorithm:queue1interval} is round based and runs for $k$ cycles. The value of $k$ does not need to be known to the algorithm; we discuss later how to get around to this problem. There are two phases in every cycle: the {\em search phase} and the {\em cancelation phase}. The search phase runs for $n$ rounds and after that the cancelation phase runs for the same $n$ rounds.
Therefore, each cycle is of $2n$ rounds in this algorithm. Algorithm \ref{algorithm:queue1interval} can solve the queuing problem without the cancelation phase, however in that case messages are queued in $\bQ$ in the order starting from the smallest UID message to the largest UID message.  
%In other words, each search and cancelation phase runs for $n$ rounds.

The intuition behind the algorithm is as follows.
In each round $r$ of the search phase, all nodes in the network propagate the smallest {\queue} request
%initiated by the smallest UID node
they have heard about that has not yet joined the queue $\bQ$. %The smallest \queue\ request is selected using to the lexicographical ordering
The smallest \queue\ message request is selected with respect to the lexicographical ordering on first the initiation round and then on the UID of the requesting node of the \queue\ requests.
Initially, each node that initiated the {\queue} request broadcasts the {\queue} request to its neighbors. Moreover, in each round of the phase nodes remember the smallest {\queue} request %initiated by the smallest UID node
they have sent or received so far in the execution, and broadcast that value in the next round of the phase. At the end of the search phase, each node in the network checks its local successor variable to determine whether a {\queue} request %(among the received requests)
that was received during the search phase can actually join $\bQ$.

Similar to the \queue\ message broadcasting in the search phase, a special kind of message called {\cancel} message that is initiated at the predecessor node of the enqueued request at the end of the search phase, is broadcasted to the all the nodes in the network in the cancelation phase to remove the {\em pending} (i.e., waiting to join $\bQ$) {\queue} request from the network nodes for the {\queue} request that has joined $\bQ$ at the end of the search phase. Note that Algorithm \ref{algorithm:queue1interval} guarantees that at the end of every search phase one {\queue} request joins the queue; we give formal proof in Section \ref{subsection:analysis1interval}.
This {\cancel} message broadcasting is used in Algorithm \ref{algorithm:queue1interval} to ensure that every \queue\ request will be enqueued in $\bQ$ and no \queue\ request will be enqueued in $\bQ$ more than once. At the end of the cancelation phase, every node removes the matching {\queue} request, if any, from the list of {\queue} requests that are waiting at that node during execution to join $\bQ$.

\begin{algorithm}[!t]
{\footnotesize
%\tcp{...........}
$R_u(r) \leftarrow \emptyset$; \tcp*[f]{\queue\ requests at node $u$ at the beginning of round $r$} \\
%~~2: $TS_i \leftarrow \emptyset$; \tcp*[f]{Timestamps of \cancel\ requests at node $i$}\\
$C_u(r) \leftarrow \emptyset$; \tcp*[f]{\cancel\ requests at node $u$ at the beginning of round $r$}\\
%$I_u(r) \leftarrow \emptyset$; \tcp*[f]{Initiation times of \queue\ requests in $R_u(r)$}\\
\BlankLine
{\bf For} $\ell=0,\ldots, k-1$ {\bf do}\\
{\Indp
{\bf Search phase:}\\
{\Indp
{\bf For} $r=0,\ldots, n-1$ {\bf do}\\

%~~3: \qquad {\bf In every round $r$ do}\\
%~~6: \qquad\qquad {\bf If} $succ_i == \bot$ {\bf then}\\
%~~7: \qquad\qquad\qquad $t \leftarrow \min(R_i)$;\\
%~~8: \qquad\qquad\qquad $succ_i \leftarrow t$; \tcp*[f]{node $t$ becomes the successor of $i$}\\
%~~9: \qquad\qquad\qquad {\bf generate} {\sl cancel} message $m=\langle {\sl cancel}, t, ts\rangle$; \\ %\tcp*[f]{To remove pending $\langle {\sl queue}, t\rangle$ from other nodes}\\
%~10: \qquad\qquad\qquad $TS_i \leftarrow TS_i \bigcup \{ts\}$;\\
%~11: \qquad\qquad\qquad $C_i \leftarrow C_i \bigcup \{t\}$;\\
%~12: \qquad\qquad\qquad $R_i\leftarrow R_i \backslash \{t\}$;\\
{\Indp
$q_{\min} \leftarrow$ a \queue\ message in  $R_u(r)$ that is smallest w.r.t. lexicographical ordering on the initiation round and the identifier of the issuing node, respectively;\\
%{\bf If} $I_u(r)$ contains more than one element {\bf then}\\
%{\Indp
%$r_p \leftarrow \min(I_u(r))$;\\
{\bf broadcast} $q_{\min}$ %  = \langle {\sl queue}, r_p, UID_p\rangle$
to neighbors;\\
%}
%{\bf Else} \\
%{\Indp
%$UID_p \leftarrow \min(R_u(r))$;\\
%{\bf broadcast} $m = \langle {\sl queue}, r_p, UID_p\rangle$ to neighbors;\\
%}
{\bf receive} \queue\ messages % $m_1 = \langle {\sl queue}, t_1\rangle, \cdots, m_s = \langle {\sl queue}, t_s\rangle$
 from $s\geq 1$ neighbors;\\
%~13: \qquad\qquad {\bf For each} $m_j = \langle {\sl queue}, l\rangle$ {\bf do}\\
$R_u(r) \leftarrow R_u(r) \bigcup \{q_1, \cdots, q_s\}$;\\
}
{\bf If} $succ_u == \bot$ {\bf then}\\
{\Indp
$t \leftarrow$ UID of the first received \queue\ message in $R_u(r)$;\\
$succ_u \leftarrow t$; \tcp*[f]{node $t$ becomes the successor of $u$}\\
{\bf generate} {\sl cancel} message $m=\langle {\sl cancel}, t\rangle$; \\ %\tcp*[f]{To remove pending $\langle {\sl queue}, t\rangle$ from other nodes}\\
%~10: \qquad\qquad\qquad $TS_i \leftarrow TS_i \bigcup \{ts\}$;\\
$C_u(r) \leftarrow C_u(r) \bigcup \{m\}$;\\
%~12: \qquad\qquad\qquad $R_i\leftarrow R_i \backslash \{t\}$;\\
%~~3: \qquad $t \leftarrow \min(R_i)$;\\
%~~3: \qquad\qquad {\bf For all} neighbors $j$\\
%~~3: \qquad\qquad\qquad {\bf  If} $t \notin R_j$ {\bf then}\\
%~~3: \qquad\qquad {\bf broadcast} $t$ to all its neighbors;\\
}
\BlankLine
}
{\bf Cancelation phase:}\\
{\Indp
{\bf For} $r=0,\ldots, n-1$ {\bf do}\\
{\Indp
$m \leftarrow$ the smallest UID {\sl cancel} message in $C_u(r)$;\tcp*[f]{in fact, $C_u(r)$ is a singleton set}\\
{\bf broadcast} $m$ to neighbors;\\
{\bf receive} \cancel\ messages from $s\geq 1$ neighbors;\\
$C_u(r) \leftarrow C_u(r) \bigcup \{m_1,\cdots, m_s\}$;\\
}
{\bf If} UID of the smallest {\sl cancel} message in $C_u(r)$ is equal to $UID_u$ {\bf then} %\\
%{\Indp
$succ_u \leftarrow \bot$;\\
%}
$R_u(r)\leftarrow$ $R_u(r)\backslash C_u(r)$ w.r.t. UIDs;\\ %after removing \queue\ requests with UIDs that match with the UIDs of the \cancel\ requests in $C_u(r)$;\\
$C_u(r) \leftarrow \emptyset$;\\
%~21: \qquad\qquad $ts_c \leftarrow \max(TS_i)$;\\
%~15: \qquad\qquad $t \leftarrow \max(C_i)$;\\
%
%~23: \qquad\qquad $TS_i\leftarrow \{ts_c\}\backslash TS_i$;\\
%~24: \qquad\qquad {\bf receive} \cancel\ messages %$m_1 = \langle {\sl cancel}, t_1, ts_1\rangle, \cdots, m_s = \langle {\sl cancel}, t_s, ts_s\rangle$
%from $s\geq 1$ neighbors;\\
%~26: \qquad\qquad $C_i \leftarrow C_i \bigcup \{t_1,\cdots, t_s\}$;\\
%~22: \qquad\qquad {\bf check} for $\langle {\sl cancel}, t, ts\rangle$;\\
%~21: \qquad\qquad {\bf For each} $m_j = \langle {\sl cancel}, l, ts_l\rangle$ {\bf do}\\
%~25: \qquad\qquad $TS_i \leftarrow TS_i \bigcup \{ts_1,\cdots, ts_s\}$;\\
%~26: \qquad\qquad $C_i \leftarrow C_i \bigcup \{t_1,\cdots, t_s\}$;\\
%~27: \qquad\qquad {\bf If} $C_i \bigcap R_i \neq \emptyset$ {\bf then}\\
%~13: \qquad\qquad\qquad\qquad {\bf Broadcast} $m$ to all its neighbors;\\
}}}
\caption{A queuing algorithm run by node $u$}
\label{algorithm:queue1interval}
\end{algorithm}

%We now describe Algorithm \ref{algorithm:queue1interval} in detail.
We present some necessary notations used in Algorithm \ref{algorithm:queue1interval}.
We assume that, initially, there is a node in $G$ that is the head of the queue $\bQ$, denoted by $head$.
Moreover, there are two kind of requests in the system: {\queue} requests and {\cancel} requests.
We denote a {\queue} request $q$ from a node $u\in G$ by a message $m$ which is a triple $\langle {\sl queue}, r_u, UID_u\rangle$, where $r_u$ is the round in which the request $q$ was initiated and  $UID_u \in \mN$ is the unique identifier of the node $u$ that issued $q$. Moreover, we denote a {\sl cancel} request by a message $m$ which is a double  $\langle {\sl cancel}, UID_u\rangle$, where $UID_u \in \mN$ is the identifier of the node $u$ the {\queue} request from which joined $\bQ$ in some node $v$ such that $v$ issued the {\sl cancel} request to remove the pending \queue\ request $\langle {\sl queue}, r_u, UID_u\rangle$ from all nodes in $G$ except $u$ and $v$. %and $ts$ is the timestamp to represent the initiation time of the {\cancel} request.
Note that a corresponding {\cancel} request for a {\queue} request is always initiated by the predecessor node of that {\queue} request.

Every node $x$ in $G$ has a local variable $succ_x$ to denote the successor of the node $x$ in $\bQ$. This variable plays very important role in forming the distributed total order of the {\queue} requests. $succ_x$ variable implicitly stores the total distributed order, i.e., visiting the nodes specified by the $succ_x$ variable in the order starting from the $head$ node up to the tail node provides the distributed queuing order. The local variable %$succ_x$ for each node $x$ takes a value that is either $UID_y$, or $\bot$ or $\infty$, i.e.
$succ_x$ for any node $x$ takes one of the three values at any time, that is,
$succ_x \in \{UID_y, \bot, \infty\}$, where $UID_y$ is the UID of a node $y\in G$ such that $UID_x \neq UID_y$. Initially, every node $u$ in the system has $succ_u =\infty$, except the head node of the queue which has $succ_{head} =\bot$.  The value $succ_u =\infty$ for $u$ becomes $succ_u =\bot$ when $u$ becomes the successor in $\bQ$. When a {\queue} request from a node $z\in G$ finds a node $w$ with $succ_w=\bot$ (w is the tail node of $\bQ$), it changes the value of $succ_w$ from $\bot$ to the $UID_z$, the UID of $z$ to become the new tail of $\bQ$. %As soon as $succ_i$ is changed  the successor for the node $i$ in the distributed queue. It has three values. If $succ_i=\ell$ such that $\ell$ is the UID of a node in the system, it signifies that the successor of node $i$ is node $\ell$. Initially, every node has $succ_i=\infty$, except the head of the queue which has $succ_i=i$. A node

We denote by $R_u(r)$ the set of {\queue} requests node $u$ has received by the beginning of round $r$. Node $u$ may or may not have the input, which we denote by $I(u)$.
Node $u$ has the input if $u$ issued the {\queue} request, otherwise it has no input. Our algorithm satisfies that: (a) for all $u\in V$ and round $r\geq 0$, the message sent by $u$ at round $r$ is a member of $R_u(r) \cup I(u) \cup \{\bot\}$, where $\bot$ denotes the empty message, and (b) node $u$ can not halt in round $r$ unless all the \queue\ requests in $\cE$ are served, i.e. enqueued in $\bQ$. Note that our algorithm do not combine or alter \queue\ messages, it only stores and forwards them.
Similarly, we denote by $C_u(r)$ the set of {\cancel} requests node $u$ has received by the beginning of round $r$. Similar to the definition of $R_u(r)$, node $u$ may or may not have a {\cancel} message as input which can be defined accordingly. %We defer more details in Appendix.

We are now ready to describe how algorithm works.
Recall that in every cycle, the search and the cancelation phase run one after another for $n$ rounds each.
In each round of the search phase (Lines 5--9 of Algorithm \ref{algorithm:queue1interval}, the smallest {\queue} request among \queue\ requests in $R_u(r)$ is chosen to broadcast by each node $u\in V$.
The smallest \queue\ message (or request) is selected with respect to the lexicographical ordering on first the initiation round and then on the UID of the requesting node of the \queue\ requests in $R_u(r)$.
After that each node updates $R_u(r)$ by receiving the {\queue} messages send by its neighbors in that round. At the end of the search phase, each node $u\in V$ checks whether the local variable $succ_u$ is $\bot$. If $succ_u==\bot $ for some node $u$, $u$ selects the UID, say $t$, of the {\queue} message that was received by $u$ first among the available {\queue} messages in $R_u(r)$  and assign that UID to its local variable $succ_u$.
%We ask Algorithm \ref{algorithm:queue1interval} to enqueue the first \queue\ request received so that \queue\ requests generated in arbitrary times can  can also appear in the beginning of $\bQ$.
In other words, the node whose {\queue} request  reached to $v$ first becomes the successor of node $v$ (Line 11, 12 of Algorithm \ref{algorithm:queue1interval}). After that a {\cancel} message $m$ is generated at $u$ to remove the {\queue} message from $t$ (that just joined $\bQ$) that might have been replicated from the other nodes of the graph $G$ (Lines 13, 14 of Algorithm \ref{algorithm:queue1interval}). This ensures that the same \queue\ request from $t$ will not be enqueued in $\bQ$ later.

In each round of the cancelation phase (Lines 16--20 of Algorithm \ref{algorithm:queue1interval}),
each node $u$ chooses the smallest UID {\cancel} message $m$ from $C_u(r)$ and broadcast $m$ to its neighbors similar to \queue\ requests in the search phase.
%for that round.
After that it receives the {\cancel} messages sent to it by its neighbors and updates $C_u(r)$ accordingly. Note that as only one {\queue} request can be enqueued in $\bQ$ in the search phase, there is always only one {\cancel} message in each cancelation phase. Therefore, $C_u(r)$ is a singleton set. At the end of the cancelation phase, if the UID of node $u\in V$ matches the UID of the smallest \cancel\ message in $C_u(r)$, the node $u$ changes the value of its local variable $succ_u$ from $\infty$ to $\bot$ (Lines 21 of Algorithm \ref{algorithm:queue1interval}). At this point, one \queue\ request from some node in $G$ is served by Algorithm \ref{algorithm:queue1interval}. $R_u(r)$ for each node $u\in V$ is then updated by removing the \queue\ message from $R_u(r)$ the  UID of which matches with the UID of the \cancel\ message in $C_u(r)$, and $C_u(r)$ is made empty before transiting to the next cycle (Lines 22, 23 of Algorithm \ref{algorithm:queue1interval}).

We now describe how to get around to the problem of knowing $k$. We remove this assumption %of knowing $k$
by allowing the node $x\in V$ which has $succ_x==\bot$ to broadcast an algorithm termination message to its neighbors if it does not receive any {\queue} message %to change $succ_x$ value to
%some $UID$
for up to $2n$ rounds. Node $x$ can maintain a local variable that is dedicated to perform this operation. The termination message from $x$ reaches all the nodes in the network in at most $n$ rounds; after that every node can terminate the execution.

\subsection{Analysis}
\label{subsection:analysis1interval}

\paragraph{Progress and Correctness:}

We first establish progress guarantees of Algorithm \ref{algorithm:queue1interval}. Let $\bB_i$ be the set of active \queue\ requests in the beginning of any cycle $i\geq 1$ and let
%To prove progress guarantees, l
$\beta_i$ be the number of active \queue\ requests in $\bB_i$ (i.e., $\beta_i=|\bB_i|$); each cycle is of exactly $2n$ rounds in our algorithm. Note that the size of $\bB_i$ may be different from one cycle to the other cycle. Therefore, $\beta_i$ captures essentially the concurrency level of the \queue\ request execution in the algorithm in any arbitrary moment of time.
We prove progress guarantees of Algorithm \ref{algorithm:queue1interval} in dynamic executions for the continuous arrival of \queue\ requests initiated by graph nodes over time (i.e., $k$ is not bounded in this setting so that $k\rightarrow \infty$). In particular, we prove the following lemma.

\begin{lemma}
\label{lemma:onequeueeveryOn}
If there are $\beta_i$ active \queue\ requests in the beginning of any cycle $i$ in a dynamic execution, Algorithm \ref{algorithm:queue1interval} guarantees that they will be enqueued in $\bQ$ in next at most $O(n\beta_i)$ rounds.
\end{lemma}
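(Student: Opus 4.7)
The plan is to decompose the $2n$-round cycle into the two phases, establish that exactly one pending request is enqueued per cycle, and then argue that as long as $\bB_i$ still has pending members, the request drained in each cycle belongs to $\bB_i$. Once both facts are established, at most $\beta_i$ cycles of $2n$ rounds each are needed, giving $2n\beta_i = O(n\beta_i)$.

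For the first fact, I would fix a cycle starting at round $r_0$ and let $w$ denote the unique tail node (the node with $succ_w = \bot$ at time $r_0$). Let $q^*$ be the lexicographically smallest pending queue request at time $r_0$. Since $q^*$ has the globally smallest initiation round among all requests that will ever be pending during this cycle (any newly arriving request has initiation round $\geq r_0$ and is strictly larger), every node $u$ that has $q^*$ in $R_u$ will pick $q^*$ as its $q_{\min}$ in Line 7 and rebroadcast it in Line 8. By 1-interval connectivity together with Lemma~\ref{lemma:unecut-more}, the set of nodes knowing $q^*$ grows by at least one per round, so within the $n$-round search phase $q^*$ reaches every node, including $w$. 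Hence $R_w \neq \emptyset$ at the end of the search phase, Lines 10--14 fire, $succ_w$ is set to some UID $t$, and a matching cancel message is generated. Applying the same flooding argument to the cancel message throughout the $n$-round cancelation phase, every node removes the enqueued request from its $R_u$, so exactly one pending request has been drained in the cycle.

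For the second fact, I would argue that while $\bB_i$ still has a pending member, the UID $t$ chosen as $succ_w$ is the UID of some request in $\bB_i$. The lever is the broadcast-smallest rule: because every node always rebroadcasts the lexicographically smallest request in its $R_u$, a message with a larger initiation round cannot arrive at $w$ ``first'' while a request from $\bB_i$ (whose initiation round is strictly smaller than the start of cycle $i$, and hence strictly smaller than any later-arriving request) is still propagating; once the smaller request has reached a neighbor of $w$, that neighbor broadcasts it in preference to any newer request, so the arrival order at $w$ respects the lexicographic order. An induction on round $r$, tracking for each node $u$ the minimum initiation round present in $R_u$, formalizes this. Consequently each cycle drains one member of $\bB_i$, and after at most $\beta_i$ cycles every request of $\bB_i$ is enqueued in $\bQ$, yielding the claimed $O(n\beta_i)$ bound.

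The main obstacle I expect is the second step. The pseudocode literally writes ``the first received queue message'' (arrival-order) rather than ``the smallest,'' so one must tie the first-arrival ordering at $w$ back to the lexicographic ordering used for broadcasting. The induction sketched above is the technical crux: it must show that no message with a larger initiation round can overtake $q^*$ in reaching $w$, since any neighbor of $w$ holding $q^*$ will transmit $q^*$ in preference to its own newer request. If this invariant fails (for instance, if $w$ has a neighbor in round $0$ that never subsequently hears of $q^*$), one must instead appeal to the fact that the tail $w$ can also update $succ_w$ only once per cycle, and that starvation is prevented by the rotation of the tail together with the monotone spread of $q^*$. Either way, carefully handling this arrival-vs-lexicographic distinction is what the proof really turns on.
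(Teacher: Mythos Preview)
Your proposal follows essentially the same approach as the paper: show that the lexicographically smallest active request floods to all nodes (including the tail) within the $n$-round search phase via the cut argument of Lemma~\ref{lemma:unecut-more}, that the corresponding \cancel\ message likewise floods in the next $n$ rounds, and that requests initiated during or after cycle $i$ cannot overtake those in $\bB_i$ because of the initiation-time component of the ordering, so $\beta_i$ cycles of $2n$ rounds suffice. Your scrutiny of the ``first received'' versus ``smallest'' wording in Line~11 is in fact more careful than the paper's own proof, which simply asserts that later-initiated requests ``can not overtake'' those in $\bB_i$ without working through the arrival-order subtlety you flag.
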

\begin{proof}
Recall that a \queue\ request can join $\bQ$ in Algorithm \ref{algorithm:queue1interval} at the end of the search phase. %; at that time  reaches a node $x$ such that $succ_x=\bot$ in some round.
Moreover, when searching for the node with $succ=\bot$, the \queue\ request is stored in every node it visits until this \queue\ message at those nodes is later canceled by a corresponding {\cancel} message.
We have that there is a node $head$ with $succ_{head}=\bot$ in the beginning of the execution (which is also the tail) and the tail node $tail$ in the beginning of cycle $i$  where one of the future \queue\ requests need to reach to join $\bQ$.
%If each node that initiated a {\queue} request  its {\queue} request in less than $n-1$ rounds, this lemma immediately holds.
Therefore, we show the following for the active \queue\ requests set $\bB_i$ in the beginning of every cycle $i$:  the smallest \queue\ request (with respect to the lexicographical ordering on first the initiation round and then on the UID of the \queue\ request issuing node) in $\bB_i$, say $q_{\min}$, among $\beta_i$ requests in $\bB_i$ reaches the node $u$ with $succ_u=\bot$ within $n$ rounds from the beginning of the cycle $i$. % assuming no overtaking of the \queue\ requests from other nodes with larger UIDs (we describe this overtaking case later).
This is the case because, according to Algorithm \ref{algorithm:queue1interval}, when two or more \queue\ requests reach at some intermediate node $y$ such that $succ_y=\infty$, the smallest \queue\ request among them is broadcasted to the neighboring nodes of $y$  (Lines 6, 7 of Algorithm \ref{algorithm:queue1interval}).  The node $y$ continues broadcasting the smallest {\queue} message %from the smallest UID node %
among the \queue\ requests it currently holds
until $y$ receives the corresponding \cancel\ message for that {\queue} request or the other \queue\ request that is smaller than the previous one is reached to $y$ in the previous round. % of smallest UID node.
Therefore, in a given round, consider a cut between the nodes that already received the smallest \queue\ request and those that have not. From the properties of 1-interval connected graphs, there is always an edge in that cut such that when the smallest \queue\ request is broadcasted on that edge some new node receives it (Lemma \ref{lemma:unecut-more}). Since the node that initiated the \queue\ request already knows the \queue\ message and there are $n$ nodes in the graph $G$, after $n-1$ rounds all nodes have the smallest \queue\ request message.

In $n$ rounds after the beginning of cycle $i$, a \queue\ request issued by node $z$ can be enqueued in $\bQ$ by assigning $succ_{tail} \leftarrow z$, which indicates that $z$ became the successor of $tail$ in $\bQ$ in cycle $i$. The predecessor node of $z$, i.e.  $tail$, now issues a $\cancel$ message with the UID of $z$ % and timestamp as the
%current time
and broadcasts it to its neighbors in the cancelation phase for $n$ rounds. Similar to the searching phase, %for the \queue\ message to find the predecessor,
the \cancel\ message reaches to the node that issued the \queue\ request within $n$ rounds. This can be again shown by considering the cut between the nodes that already received the \cancel\ message and those that have not (Lemma \ref{lemma:unecut-more}). When node $z$ finds the \cancel\ message with UID equals the node UID, it changes its $succ_z$ variable from $\infty$ to $\bot$ at the end of the search phase, so that other \queue\ requests can join $\bQ$ later.
Therefore, the smallest {\queue} request is finished execution by the Algorithm \ref{algorithm:queue1interval} in exactly $2n$ rounds after the beginning of cycle $i$. Now in cycle $i+1$ some other \queue\ request from $\bB_i$ becomes the smallest \queue\ request. As \queue\ requests that are initiated during cycle $i$ have initiation times greater than all the requests in $\bB_i$, they can not overtake \queue\ requests in $\bB_i$ to join $\bQ$.  That is, any request that is generated in cycle $i+1$ are ordered in the queue after the requests in $\bB_i$. Therefore, at end of cycle $i+1$, the second smallest request from $\bB_i$ joins $\bQ$. Applying this argument repetitively for $\beta_i$ requests in $\bB_i$,
all the \queue\ requests in $\bB_i$ join queue in next $\beta_i$ cycles starting from cycle $i$. Therefore, we need total $2n * \beta_i =\cO(n\beta_i)$ rounds after the beginning of the cycle $i$ to enqueue all requests in $\bB_i$.
%
%the {\queue} request from the some other node becomes the {\queue} request with the smallest UID and it will be served in next $2n-2$ rounds.  This process repeats until every \queue\ request in $\cE$ joins $\bQ$. Therefore, one \queue\ request is enqueued in every $O(n)$ rounds in the worst-case, combining the \queue\ message broadcasting rounds and \cancel\ message broadcasting rounds.
\end{proof}

It is clear from Lemma \ref{lemma:onequeueeveryOn} that from the round some {\queue} request joined $\bQ$ until the round the node that issued that {\queue} request received the corresponding {\cancel} message and changes the value of its successor variable $succ$ from $\infty$ to $\bot$, $\bQ$ becomes {\em tailless}. Tailless is the situation in which no node in $G$ has $succ=\bot$.  However, this phenomenon happens in $\bQ$ for just $n$ rounds which follows immediately from Lemma \ref{lemma:onequeueeveryOn}.

\begin{corollary}
\label{corollary:tailless}
The queue formed is tailless for $O(n)$ rounds.
\end{corollary}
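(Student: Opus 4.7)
The plan is to read off the taillessness window directly from the mechanics described in the proof of Lemma \ref{lemma:onequeueeveryOn}. First I would pin down the precise moment at which $\bQ$ loses its tail. By definition, $\bQ$ has a tail exactly when some node $x\in V$ has $succ_x=\bot$. In cycle $i$, at the end of the search phase (after exactly $n$ rounds), the current tail node $tail$, which is the unique node with $succ_{tail}=\bot$, executes Lines 11--12 of Algorithm \ref{algorithm:queue1interval} and sets $succ_{tail}\leftarrow z$ for the source $z$ of the smallest active \queue\ request. From this round onward, $tail$ no longer has $succ=\bot$, and $z$ still has $succ_z=\infty$ (since $z$ has not yet processed its matching \cancel\ message). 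Hence $\bQ$ becomes tailless at this round.

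Next I would identify the round at which a tail reappears. The node $z$ only flips $succ_z$ from $\infty$ to $\bot$ when it receives the \cancel\ message carrying its own UID (Line 21). Per the proof of Lemma \ref{lemma:onequeueeveryOn}, the \cancel\ message is issued by $tail$ at the start of the cancelation phase and, by the cut argument based on Lemma \ref{lemma:unecut-more}, propagates to $z$ within $n$ rounds. Thus $z$ sets $succ_z\leftarrow \bot$ at the end of the cancelation phase of cycle $i$, becoming the new tail. In between, no node holds $succ=\bot$: $tail$ has had its successor variable reassigned, $z$ still carries $\infty$, and every other node $v$ (having never been a tail in this window) retains $succ_v\in\{UID_\cdot,\infty\}$. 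Therefore the taillessness window is precisely the cancelation phase, which lasts $n$ rounds.

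Combining these two observations, the queue is tailless for exactly $n=O(n)$ rounds, which establishes the corollary. The only thing that needs care is to argue that no node other than $z$ can have $succ=\bot$ during the cancelation phase; this is immediate because the only place in Algorithm \ref{algorithm:queue1interval} where $succ_u$ is set to $\bot$ is Line 21, which fires only when a node's own UID matches the broadcast \cancel\ message, and by construction there is a unique such \cancel\ message per cycle, targeted at $z$. No genuine obstacle arises, since all of the heavy lifting (one enqueue per cycle, $n$-round propagation of \queue\ and \cancel\ messages) has already been done in Lemma \ref{lemma:onequeueeveryOn}; the corollary is essentially a bookkeeping statement extracted from its proof.
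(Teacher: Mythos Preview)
Your proposal is correct and follows essentially the same approach as the paper: the paper simply remarks that the corollary follows immediately from Lemma~\ref{lemma:onequeueeveryOn}, since the tailless window is precisely the interval between the enqueue at the end of the search phase and the moment the newly enqueued node receives the matching \cancel\ message and flips its $succ$ variable to $\bot$ at the end of the cancelation phase, i.e., $n$ rounds. Your write-up in fact spells out more of the bookkeeping (uniqueness of the \cancel\ target, why no other node can have $succ=\bot$ in the interim) than the paper does.
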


We now prove the correctness properties of Algorithm \ref{algorithm:queue1interval} in the sense that it eventually forms a distributed queue so that every \queue\ request is enqueued in $\bQ$ and
%Particularly, we prove that
each \queue\ request is enqueued only once.

\begin{lemma}
\label{lemma:enqueuedonce1I}
Each {\queue} request is enqueued in $\bQ$ only once.
%There exists of no set of finite number of requests $R=\{r_1,r_2,\ldots, r_l\}$ whose successor links form a phase.
\end{lemma}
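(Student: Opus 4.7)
The plan is to argue by contradiction. Suppose some queue request $q$ issued by a node $w$ is enqueued in $\bQ$ in two distinct cycles $i_1 < i_2$. By inspection of Lines 10--14 of Algorithm \ref{algorithm:queue1interval}, for $q$ to be enqueued in cycle $i$ there must exist a node $u$ with $succ_u = \bot$ at the end of that cycle's search phase whose set $R_u$ contains $q$ and selects it as the first-received queue message. I would therefore aim to show that after the cancelation phase of cycle $i_1$ no node in $G$ holds $q$ in its $R$ set, and moreover that $q$ can never subsequently reappear, which renders the premise ``$q \in R_{u_2}$ at the end of the search phase of cycle $i_2$'' untenable.

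First I would examine the cancelation phase of cycle $i_1$. At the end of the preceding search phase, the predecessor node $u_1$ generates $m = \langle \cancel, UID_w \rangle$ and inserts it in $C_{u_1}$ (Lines 13--14). Because Algorithm \ref{algorithm:queue1interval} enqueues at most one queue request per cycle, $C_u(r)$ remains a singleton at every node that has received it, so there is no ambiguity about which cancel message is forwarded. I would then replay the cut argument used in the proof of Lemma \ref{lemma:onequeueeveryOn}, invoking Lemma \ref{lemma:unecut-more}, to conclude that within $n-1$ rounds every node of $G$ has received $m$, well within the $n$-round duration of the cancelation phase. Then, by Line 22, every node $u$ deletes from $R_u$ any queue message whose UID matches the cancel it holds and clears $C_u$. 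Hence at the start of cycle $i_1 + 1$, no node in $G$ (including $w$ itself) still holds $q$ in its $R$ set.

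The main obstacle, and the step I would treat with most care, is ruling out that $q$ re-enters some $R_u$ in a later round. Here I would rely on the explicit design property stated in the algorithm description that the algorithm only stores and forwards queue messages and never regenerates or alters them: the only way a queue message for $w$ can enter $R_u(r)$ is to have been broadcast by a neighbor in the previous round, which requires that neighbor already hold $q$ in its own $R$ set. I would then argue inductively on rounds following cycle $i_1$ that no node ever broadcasts $q$, using as base case the fact just established that immediately after cycle $i_1$'s cancelation phase every $R$ set is $q$-free. Consequently no node can have $q$ available to select during the search phase of cycle $i_2$, contradicting the assumption and establishing the lemma.
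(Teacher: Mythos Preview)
Your argument is correct and reaches the same conclusion as the paper, but the emphasis differs. The paper's proof is organized around the $succ$ invariant: it argues that at most one node has $succ=\bot$ at any given moment, traces how $succ$ transitions from $\infty$ to $\bot$ only after a node's own request has been enqueued, and only at the end observes that the enqueued request is ``removed from the system in the cancelation phase so that there is no chance of that request being enqueued in $\bQ$ again.'' You instead take the ``one enqueue per cycle'' fact as given (which is fine, since it follows immediately from the uniqueness of the $\bot$ node at the end of a search phase) and invest your effort in making the removal step rigorous: you explicitly run the cut argument of Lemma~\ref{lemma:onequeueeveryOn} on the {\cancel} message, conclude that every $R_u$ is purged of $q$ by Line~22, and then close with an induction showing $q$ cannot be regenerated because the algorithm only stores and forwards. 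Your route is a bit more careful on exactly the point the paper glosses over, while the paper is more explicit about the $succ$-variable bookkeeping that you use implicitly; either decomposition suffices.
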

\begin{proof}
We have from Lemma \ref{lemma:onequeueeveryOn} that each queue request is enqueued in $\bQ$ within finite number of rounds after it is issued.
To prove that each \queue\ request is enqueued in $\bQ$ only once,
recall that initially every node $u\in V$ has $succ_u =\infty$ except the head node of $\bQ$ which has $succ_{head} = \bot$. According to Algorithm \ref{algorithm:queue1interval}, no \queue\ request can make itself the successor of any node in $G$ for which $succ_i = \infty$ or $succ_i = j$, where $j$ is the UID of some node in graph $G$ such that $j\neq i$.
%Therefore,
%
%
%
%there are two situations that may allow one request to enqueue in the queue more than once. However, we show that this situations also do not allow a request to enqueue in the queue more than once.
%
%The first situation is when $succ =\bot$ for more than one node in $G$.  We show that this is not the case.  We show that there is only one node that has $succ_i=\bot$ in any round.
%This is the case because
In Algorithm \ref{algorithm:queue1interval}, we have that each node $u$ changes the value of its local variable $succ_u$ from $\infty$ to $\bot$ only after the {\queue} request from it joined $\bQ$ at node $x$ at the end of the search phase such that $succ_x = u$ (i.e., $u$ becomes the tail of $\bQ$) and the \cancel\ message generated at $x$ (the predecessor node of $u$ in $\bQ$) to remove replicated {\queue} message for the queue request of  $u$ (from other nodes in $G$ except $x$ and $u$) reaches $u$ at the end of the cancelation phase, the current tail of $\bQ$. Therefore, only one {\queue} request can see $succ_l=\bot$ at some node $l$ such that some pending \queue\ request from node $o$ that is currently at node $l$ %(the smallest {\queue} request based on the lexicographical ordering on initiation round $r$ and identifier UID of the issuing node among the \queue\ requests in $R_u(r)$)
can make $succ_l=o$ at the end of every cycle. After $o$ becomes the successor of $l$, there is no node $p$ in the system with $succ_p=\bot$ until a \cancel\ message from $l$ reaches $o$ and $o$ changes its $succ_o$ variable value from $\infty$ to $\bot$ at the end of the cycle. Arguing similar to Lemma \ref{lemma:onequeueeveryOn}, any change in the $succ$ variable for any node in done after $n$ rounds of message exchanges. The first change is done in the node with $succ=\bot$ at the end of a search phase to make it point to some requesting node $u$ and the second change is done in $u$ at the end of a cancelation phase to make $succ_u=\bot$ from $succ_u=\infty$.
The \queue\ request that is enqueued in $\bQ$ in search phase is removed from the system in cancelation phase so that there is no chance of that request being enqueued in $\bQ$ again in the future. Therefore, every request in enqueued in $\bQ$ and each queue request in enqueued exactly once.
\end{proof}

\paragraph{Complexity in Sequential Executions:}
We prove here the round complexity of Algorithm \ref{algorithm:queue1interval} in forming $\bQ$ for the set $\cE$ of $k$ \queue\ requests from $k$ different nodes of $G$.
We first prove the round complexity of Algorithm \ref{algorithm:queue1interval} in sequential execution of {\queue} requests. A sequential execution consists of a non-overlapping sequence of {\queue} operations. As {\queue} requests do not overlap with each other in sequential executions, the system attains quiescent configuration (no message is in transit and no sequence of events in which a message is sent) after a \queue\ request is served and until a next \queue\ request is issued, i.e. the next \queue\ request will be issued only after the current \queue\ request finishes. We provide the tight bound for Algorithm \ref{algorithm:queue1interval} in sequential executions.

\begin{theorem}
\label{theorem:sequential-phase}
Algorithm \ref{algorithm:queue1interval} is optimal for the distributed queuing problem in sequential executions.
\end{theorem}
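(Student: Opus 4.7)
The plan is to establish the theorem by sandwiching the round complexity of Algorithm \ref{algorithm:queue1interval} in sequential executions between matching $\Theta(nk)$ upper and lower bounds, so that no algorithm can do asymptotically better.

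\textbf{Upper bound ($O(nk)$).} I would apply Lemma \ref{lemma:onequeueeveryOn} cycle-by-cycle. In a sequential execution, each new queue request arrives only after the previous one has been enqueued, so in every cycle that contains a pending request the concurrency level is $\beta_i = 1$. The lemma then guarantees that the single active request is enqueued within $O(n)$ rounds of its initiation. Since the $k$ requests are non-overlapping and the system is quiescent between them, the total round complexity is just the sum of the individual per-request costs, yielding $O(nk)$.

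\textbf{Lower bound ($\Omega(nk)$).} I would exhibit a hard instance on which every correct distributed queuing algorithm needs $\Omega(n)$ rounds per sequential request. Take the dynamic graph to be the static path $v_1 - v_2 - \cdots - v_n$, which is trivially $1$-interval (indeed $\infty$-interval) connected and thus a valid choice for the adversary. For a single request, place the requester at one end of the path and the current tail of $\bQ$ at the other; the adversary is free to choose the initiator of each successive request since the execution is sequential, and the static path has diameter $n-1$. To correctly enqueue the requester, the current tail $t$ must update $succ_t$ so that it points at the requester's UID; therefore the state of $t$ must causally depend on the existence and identity of that request. A standard indistinguishability argument on the path shows that in fewer than $n-1$ rounds the state of $t$ is identical whether or not the request was issued (each node's state in round $r$ can depend only on the initial states of nodes within hop-distance $r$, which is essentially the upper-bound counterpart of Lemma \ref{lemma:unecut-more} specialized to a path). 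Hence one request costs $\Omega(n)$ rounds; since sequential requests admit no amortization across one another, the $k$ requests cost $\Omega(nk)$ in total.

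The main obstacle will be making the indistinguishability step fully rigorous for arbitrary (possibly randomized) algorithms and for the situation where the tail's location changes from request to request. Both issues are handled by the rigidity of the static line: regardless of where the tail sits after the previous enqueue, the adversary can pick the next requester to be at distance $n-1$ (by alternating endpoints, for instance), and the causal-distance argument on a path is agnostic to the algorithm's internal logic. Putting the two bounds together gives $\Theta(nk)$ rounds for sequential executions, which matches Algorithm \ref{algorithm:queue1interval} exactly and establishes its optimality.
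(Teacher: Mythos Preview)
Your upper bound is exactly the paper's: apply Lemma~\ref{lemma:onequeueeveryOn} with $\beta_i=1$ and sum over the $k$ non-overlapping requests.

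For the lower bound you take a genuinely different route. The paper uses a \emph{dynamically changing} line: in each round the adversary re-permutes the path so that the set of nodes holding the request grows by only one or two, and it also charges for the cancel phase, obtaining roughly $n$ rounds per request. You instead fix a \emph{static} path and invoke an indistinguishability (causal-distance) argument: the tail cannot react to a request issued at hop-distance $d$ in fewer than $d$ rounds. Your argument is cleaner, applies to arbitrary (even randomized) algorithms without tracking message flow, and as a bonus shows that the $\Omega(nk)$ bound already holds for $\infty$-interval connected graphs.

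There is one inaccuracy to patch. You write that ``regardless of where the tail sits after the previous enqueue, the adversary can pick the next requester to be at distance $n-1$ (by alternating endpoints, for instance).'' On a static path this fails for two reasons: (i) the $k$ requesters must be distinct nodes, so after two requests both endpoints are spent; and (ii) once the tail lands at an interior vertex, no node is at distance $n-1$. The fix is easy: zigzag inward, issuing requests from $v_n, v_2, v_{n-1}, v_3, \ldots$ with the head initially at $v_1$. The successive tail--requester distances are then $n-1, n-2, n-3, \ldots, n-k$, which sum to $kn - k(k+1)/2 \ge kn/2 = \Omega(nk)$ for all $k \le n$. Alternatively, you can adopt the paper's device and let the adversary re-permute the line after each enqueue so that the new tail is always placed at one end and the next (fresh) requester at the other; this recovers distance $n-1$ every time without the bookkeeping.
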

\begin{proof}
According to Lemma \ref{lemma:onequeueeveryOn}, $\beta_i$ {\queue} requests in the beginning of cycle $i$ join $\bQ$ (i.e., find their predecessor nodes)  within next $O(n\beta_i)$ rounds starting from the beginning of the cycle $i$. Since $\beta_i=1$ in every cycle $i$ in sequential executions and there are  $k$ {\queue} requests in the system, Algorithm \ref{algorithm:queue1interval} needs $O(nk)$ rounds, in the worst-case.

We now show that this round complexity is the best possible any distributed queuing algorithm can do in sequential executions in 1-interval connected graphs. We prove that, in sequential executions, any algorithm for the distributed queuing problem in 1-interval connected graphs requires at least $\Omega(nk)$ rounds to complete against a strong adversary. We borrow some ideas from \cite{Dutta2013} for this proof.
Consider a set $\cE = \{q_1,q_2,\ldots,q_k\}$ of $k$ {\queue} requests. As {\queue} request do not overlap with each other in sequential executions, we focus our attention on the least number of rounds needed to serve one {\queue} request. The lower bound then follows by amplifying the number of rounds needed for one request to all $k$ requests in $\cE$.
We proceed as follows. Let the node $u$ issued the {\queue} request $q_0$ and node $v$ is the current tail node of the queue with $succ_v=\bot$. To finish execution of $q_0$, $q_0$ should be reached to $v$ and change the existing value of $succ_v$ such that $succ_v=u$. The adversary can connect nodes $u, v_1, \ldots, v_{n-2}, v$ in $G$  in a line in the first round thereby guaranteeing only node $v_1$ gets $q_0$. In the next round, the adversary connects $u, v_2, \ldots, v_{n-2}, v_1$  in a line. In this round, node $v_2$ and $v_{n-2}$ will both get {\queue} message $q_0$. The adversary can continue this way for $\frac{n-2}{2}+1$ rounds, at which point the \queue\ message $q_0$ from node $u$ will eventually reach the tail node $v$ with $succ_v=\bot$. After changing $succ_v$ to $u$ such that $u$ becomes the new tail, the corresponding {\cancel} messages needs also $\frac{n-2}{2}+1$ rounds to reach to node $u$ from $v$. That is, we need  $2(\frac{n-2}{2}+1)= n$ rounds to serve the queuing request $q_0$. Repeating this argument for all the $k$ {\queue} requests in $\cE$, we have the lower bound of $\Omega(nk)$ rounds, as needed.
\end{proof}

\paragraph{Complexity in Concurrent Executions:}
We now consider the round complexity of Algorithm \ref{algorithm:queue1interval} in concurrent one-shot execution of {\queue} requests.
%In one-shot concurrent executions,
We assume the $R\subseteq V, |R|=k,$ nodes in the graph $G$ issue one {\queue} request each at round 0 and no further {\queue} requests occur.
We prove the following theorem.

\begin{theorem}
\label{theorem:concurrent-phase}
Algorithm \ref{algorithm:queue1interval} solves the distributed queuing problem in $O(nk)$ rounds in concurrent executions.
\end{theorem}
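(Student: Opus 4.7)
The plan is to obtain the theorem as an almost immediate corollary of Lemma \ref{lemma:onequeueeveryOn}. In a one-shot concurrent execution, all $k$ queue requests in $\cE$ are initiated simultaneously at round $0$, so the set of active requests at the start of the very first cycle is $\bB_1 = \cE$, giving $\beta_1 = k$. Because the execution is one-shot, no further requests arrive in later cycles, and so for each $i \geq 1$ we have $\bB_{i+1} = \bB_i \setminus \{q^{(i)}_{\min}\}$, where $q^{(i)}_{\min}$ is the unique request that Lemma \ref{lemma:onequeueeveryOn} guarantees is enqueued during the $2n$ rounds of cycle $i$.

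From here I would simply apply Lemma \ref{lemma:onequeueeveryOn} with $\beta_1 = k$ to conclude that all $k$ requests in $\bB_1$ are enqueued within $O(n\beta_1) = O(nk)$ rounds of the start of cycle $1$. Since cycle $1$ starts at round $0$ and no other requests need servicing, this is the total round complexity, as required. Correctness (that every request is enqueued and each exactly once) is already ensured for any execution by Lemma \ref{lemma:enqueuedonce1I}, so only the round count needs to be argued here.

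The only subtlety to dispatch is the lexicographic tie-breaking: since all $k$ requests share initiation round $0$, the ``smallest'' request in each cycle is determined solely by UID. This still yields a well-defined, deterministic single smallest request per cycle (matching what the proof of Lemma \ref{lemma:onequeueeveryOn} relies on), so the one-request-per-cycle progress argument carries over unchanged. I do not anticipate any real obstacle beyond this piece of bookkeeping, since the heavy lifting (that the smallest active request reaches the current tail within $n$ rounds by the $1$-interval connectivity cut argument of Lemma \ref{lemma:unecut-more}, and that the matching cancel message reaches its source within another $n$ rounds) has already been discharged in the proof of Lemma \ref{lemma:onequeueeveryOn}.
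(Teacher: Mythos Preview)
Your proposal is correct and follows essentially the same approach as the paper: both identify that $\beta_1 = k$ since all requests arrive at round $0$, observe that no further requests arrive so the active set can only shrink, note that the lexicographic ordering reduces to UID ordering, and then invoke Lemma~\ref{lemma:onequeueeveryOn} with $\beta_1 = k$ to obtain the $O(nk)$ bound. Your write-up is, if anything, slightly more explicit about the per-cycle bookkeeping and the appeal to Lemma~\ref{lemma:enqueuedonce1I} for correctness, but the argument is the same.
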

\begin{proof}
%For the round complexity, we assume the worst-case execution scenario in forming the distributed queue.
According to Algorithm \ref{algorithm:queue1interval}, in the worst-case execution scenario, we can order the \queue\ requests %that joined the queue
in such  way that the smallest \queue\ request (with respect to the lexicographical ordering of active \queue\ requests) ordered first and the largest \queue\ request ordered last. As initiation time is same for all $k$ \queue\ requests in concurrent executions, the ordering only depends on the UID of requesting nodes. %This is due to the fact that initiation times are same for all $k$ requests so that the lexicographical ordering depends only on UIDs of nodes that issued \queue\ requests.
Therefore, the successor of the $head$ of $\bQ$ is the smallest UID node among the nodes that issued \queue\ requests, the successor of the head's successor node is the second smallest UID node among the nodes that issued \queue\ requests, and so on. The \queue\ request from the highest UID node ordered last in $\bQ$.
Since we consider 1-interval connected graphs and all $k$ request come at the same time in the beginning of execution, we have that $\beta_1=k$ in the beginning of the first cycle. As no more request arrives in the system later in the execution, $\beta_i$ decreases in every cycle $i > 1$. Therefore, using Lemma \ref{lemma:onequeueeveryOn} and replacing $\beta_i$ by $k$, the theorem follows.
\end{proof}

\paragraph{Complexity in Dynamic Executions:}
We now consider the round complexity of Algorithm \ref{algorithm:queue1interval} in dynamic execution of {\queue} requests.
%In one-shot concurrent executions,
We assume the $R\subseteq V, |R|=k,$ nodes in the graph $G$ issue one {\queue} request each at arbitrary moments of time.
We prove the following theorem.

\begin{theorem}
\label{theorem:dynamic-phase}
Algorithm \ref{algorithm:queue1interval} solves the distributed queuing problem in $O(nk)$ rounds in dynamic executions.
\end{theorem}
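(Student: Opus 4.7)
The plan is to derive Theorem \ref{theorem:dynamic-phase} as a direct corollary of Lemma \ref{lemma:onequeueeveryOn} together with a simple counting argument on cycles. First, I would note that in any dynamic execution with at most $k$ total \queue\ requests, the number of active requests at the start of every cycle $i$ satisfies $\beta_i \leq k$. So if I pick $i^*$ to be the first cycle at which any \queue\ request is active, Lemma \ref{lemma:onequeueeveryOn} immediately gives that all currently-active requests are enqueued within the next $O(n\beta_{i^*}) \leq O(nk)$ rounds. The subtlety is that more requests may arrive \emph{during} these cycles, so I cannot literally invoke Lemma \ref{lemma:onequeueeveryOn} once; I need to argue that the per-cycle progress continues to hold despite such arrivals.

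Next, I would lift the per-cycle progress statement from the proof of Lemma \ref{lemma:onequeueeveryOn} to the dynamic setting: in every cycle containing at least one active request, exactly one active request joins $\bQ$ at the end of the search phase and is fully removed from the system by the end of the cancelation phase. The key point that keeps this true under dynamic arrivals is the lexicographic ordering on (initiation round, UID). Any request that arrives \emph{after} cycle $i$ begins has a strictly larger initiation round than every request that was already active at the start of cycle $i$; thus at every intermediate node, it loses the smallest-request comparison to the active requests, and cannot overtake or interfere with the one being routed to the current tail during cycle $i$. Hence the single-request-per-cycle progress property carries over from the analysis of Lemma \ref{lemma:onequeueeveryOn} to the dynamic case.

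With that progress property in hand, the round bound is a straightforward accounting. Starting from cycle $i^*$, each subsequent cycle permanently reduces the total number of outstanding requests (issued but not yet enqueued) by one, as long as any remain. Since at most $k$ requests are ever issued, there are at most $k$ such productive cycles of $2n$ rounds each, giving $2nk = O(nk)$ rounds of processing. Under the standing assumption that all $k$ requests arrive within a window of $O(nk)$ rounds, the total elapsed time is therefore $O(nk)$, matching the sequential and concurrent bounds.

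The main obstacle I expect is justifying the non-interference of newly arriving requests with the routing of older ones; everything else reduces to reusing the cut-based broadcast argument already established for Lemma \ref{lemma:onequeueeveryOn} via Lemma \ref{lemma:unecut-more}. Once the lexicographic priority argument is spelled out, the theorem follows essentially by summing the per-cycle bound of Lemma \ref{lemma:onequeueeveryOn} over at most $k$ productive cycles.
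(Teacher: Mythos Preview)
Your proposal is correct and follows essentially the same approach as the paper: both derive the bound directly from Lemma~\ref{lemma:onequeueeveryOn}, using the lexicographic-ordering argument (already inside that lemma's proof) to ensure that late arrivals cannot overtake earlier active requests, and then conclude that at most $k$ productive cycles of $2n$ rounds suffice. The paper's proof is terser---it simply invokes Lemma~\ref{lemma:onequeueeveryOn} and points to Theorems~\ref{theorem:sequential-phase} and~\ref{theorem:concurrent-phase}---whereas you spell out the per-cycle accounting and the non-interference step more explicitly, but the underlying reasoning is the same.
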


\begin{proof}
%We proved in Theorem \ref{theorem:sequential-phase} that Algorithm \ref{algorithm:queue1interval} solves the distributed queuing problem in $O(nk)$ rounds in sequential executions.
We proved in Lemma \ref{lemma:onequeueeveryOn} that when a \queue\ request $q$ is issued in the arbitrary round $r$, and there are $\beta_i$ active \queue\ requests in the system which have the initiation times less than $r$, then the request $q$ will be enqueued in $\bQ$ within next $O(n\beta_i)$ rounds starting from the round $r$. %(Theorem \ref{theorem:concurrent-phase}).
Therefore, the round complexity of Algorithm \ref{algorithm:queue1interval} is dynamic executions is no more than the round complexity bounds proved in Theorems \ref{theorem:sequential-phase} and \ref{theorem:concurrent-phase}.
%In the worst-case, all the $k$ \queue\ requests are initiated in such a way that one request needs to wait at most $O(nk)$ rounds before it can join $\bQ$. This proves the theorem.
\end{proof}

\section{Queuing in More Stable Graphs}
\label{section:stable}
We now study whether the distributed queuing problem can be sped up in more stable graphs. We consider $T$-interval connected graphs of Definition \ref{definition:t-interval} and give an algorithm (see Algorithm \ref{algorithm:queue2Tinterval}) to solve the distributed queuing problem for some $T> 1$. This algorithm is also an extension to the token dissemination algorithm  given in \cite{Kuhn2010} for $T$-interval connected graphs.
%
%\subsection{Algorithm}
%The algorithm appears in Algorithm \ref{algorithm:queue2Tinterval}.
The main idea behind Algorithm \ref{algorithm:queue2Tinterval} is to serve $\gamma := \min\{\alpha, T\}$ \queue\ requests in $O(n)$ rounds when the graph is $2T$-interval connected. Note that $\alpha := \min\{\beta_1,\beta_2,\ldots\}$, where $\beta_\ell$ is the number of active \queue\ requests in the beginning of cycle $\ell$. %provided that there exist $\alpha$ \queue\ requests in the system in the beginning of any phase $\ell$.
If $\alpha=1$ in every cycle $\ell$, this constitutes a sequential execution, whereas there is a one-shot concurrent execution in the case when $\alpha\geq T$ in every cycle $\ell$. However due to the properties of $T$-interval connected graphs, Algorithm \ref{algorithm:queue2Tinterval} can  broadcast only $\gamma=T$ \queue\ requests to all the nodes in $G$ in $O(n)$ rounds in these $2T$-interval connected graphs. % (we prove this later in Section \ref{subsection:analysis-2T}).
In dynamic executions, $\gamma$ is between $2$ to $T$ in every cycle $\ell$. In summary, $\alpha$  has the impact in the performance of Algorithm \ref{algorithm:queue2Tinterval} in the sense that it determines how many cycles are needed to form a distributed queue for the active \queue\ requests. Therefore, $\alpha$ essentially represents the {\em concurrency level} of \queue\ requests. $\gamma$ does not necessarily be known to Algorithm \ref{algorithm:queue2Tinterval} in the beginning, it can be adapted based on $\beta_\ell$ and $T$ while in execution.

\begin{algorithm}[!t]
{\footnotesize
%\tcp{...........}
$S_u \leftarrow \emptyset$; \tcp*[f]{\queue\ requests already broadcasted by node $u$} \\
%~~2: $TS_i \leftarrow \emptyset$; \tcp*[f]{Timestamps of \cancel\ requests already learned by $i$}\\
%$C_u \leftarrow \emptyset$; \tcp*[f]{\cancel\ requests already learned by $u$}\\
$A_u \leftarrow \emptyset$; \tcp*[f]{\queue\ requests already received by node $u$}\\
%~~4: $I_u \leftarrow \emptyset$; \tcp*[f]{Initiation times of \queue\ requests in $A_u$}\\

\BlankLine
{\bf For} $\ell = 0, \ldots, \lceil k/\gamma \rceil -1$  {\bf do}  \tcp*[f]{$\gamma := \min\{\alpha,T\}$} \\
%{\Indp
%{\bf Search phase:}\\
{\Indp
{\bf For} $\eta = 0, \ldots, \lceil n/T\rceil -1$  {\bf do}\\
{\Indp
{\bf For} $r = 0, \ldots, 2T -1$  {\bf do}\\
%~~7: \qquad\qquad {\bf If} $r \mod 4T == 0$ {\bf then}\\
%~~8: \qquad\qquad\qquad $S_i \leftarrow \emptyset$;\\
%%$r=0, \ldots, \lceil n/T \rceil - 1$ {\bf do}\\
%%~~3: \qquad {\bf In every round $r$ do}\\
%~~9: \qquad\qquad {\bf If} $succ_i == \bot$ {\bf then}\\
%~10: \qquad\qquad\qquad $t \leftarrow \min(A_i\backslash S_i)$;\\
%~11: \qquad\qquad\qquad $succ_i \leftarrow t$; \tcp*[f]{node $t$ becomes the successor of $i$}\\
%~12: \qquad\qquad\qquad {\bf generate} {\sl cancel} message $m=\langle {\sl cancel}, t, ts\rangle$; \\ %\tcp*[f]{To remove pending $\langle {\sl queue}, t\rangle$ from other nodes}\\
%~13: \qquad\qquad\qquad $TS_i \leftarrow TS_i \bigcup \{ts\}$;\\
%~14: \qquad\qquad\qquad $C_i \leftarrow C_i \bigcup \{t\}$;\\
%%~12: \qquad\qquad\qquad $S_i\leftarrow S_i \bigcup \{t\}$;\\
{\Indp
{\bf If} $S_u \neq A_u$ {\bf then}\\
{\Indp
$q_{\min} \leftarrow$ a \queue\ message in  $A_u\backslash S_u$ that is smallest w.r.t. lexicographical ordering on the initiation round and the identifier of the issuing node, respectively;\\
{\bf broadcast} $q_{\min}$ to neighbors;\\
\begin{comment}
~10: \qquad\qquad\qquad\qquad\qquad {\bf If} $I_u(r)$ contains more than one element {\bf then}\\
%~~8: \qquad\qquad\qquad\qquad $r_p \leftarrow \min(I_u(r))$;\\
~11: \qquad\qquad\qquad\qquad\qquad\qquad {\bf broadcast} the smallest initiation time {\queue} message $\langle {\sl queue}, r_p, UID_p\rangle$ \\
\qquad\qquad\qquad\qquad\qquad\qquad\qquad that was not already broadcasted to neighbors;\\ % $m = \langle {\sl queue}, r_p, UID_p\rangle$ to neighbors;\\
\end{comment}
$S_u \leftarrow S_u \bigcup \{q_{\min}\}$;\\
\begin{comment}
~13: \qquad\qquad\qquad\qquad\qquad {\bf Else} \\
~14: \qquad\qquad\qquad\qquad\qquad\qquad  $UID_p \leftarrow \min(A_u\backslash S_u)$;\\
~15: \qquad\qquad\qquad\qquad\qquad\qquad {\bf broadcast} $\langle {\sl queue}, r_p, UID_p\rangle$ to neighbors;\\
~16: \qquad\qquad\qquad\qquad\qquad\qquad $S_u \leftarrow S_u \bigcup \{UID_p\}$;\\
\end{comment}
}
{\bf receive} \queue\ messages % $m_1 = \langle {\sl queue}, t_1\rangle, \cdots, m_s = \langle {\sl queue}, t_s\rangle$
 from $s\geq 1$ neighbors;\\
%~13: \qquad\qquad {\bf For each} $m_j = \langle {\sl queue}, l\rangle$ {\bf do}\\
$A_u \leftarrow A_u \bigcup \{q_1, \cdots, q_s\}$;\\
}
$S_u \leftarrow \emptyset$;\\
}
{\bf If} $succ_u == \bot$ {\bf then}\\
{\Indp
$t \leftarrow$ UID of the smallest \queue\ message in $A_u$ w.r.t. the lexicographical ordering;\\
%{\bf If} $|I_u| > 1$ {\bf then}\\
%{\Indp
%$t \leftarrow$ $UID_p\in A_u$ of a {\queue} message with $r_u =\min(I_u)$;\\
%}
%{\bf Else}\\
%{\Indp
%$t \leftarrow \min(A_u)$;\\
%{\Indp
$succ_u \leftarrow t$;\\
%{\bf generate} {\sl cancel} message $m=\langle {\sl cancel}, t\rangle$;\\
%$C_u \leftarrow C_u \bigcup \{m\}$;\\
}
%$UID_t \leftarrow$ UID of the \queue\ message in $C_u$;\\
%{\bf If} $UID_u == UID_t$ {\bf then}\\
%{\Indp
%$succ_u \leftarrow \bot$;\\
%}
{\bf If} a \queue\ request $q \in A_u$ is $jth\_Smallest(A_u)$ for $1 < j < \gamma$ w.r.t. the lexicographical ordering and the UID of $q$ is equal to the $UID$ of node $u$ {\bf then}\\
{\Indp
$t\leftarrow$ UID of a $(j+1)th\_Smallest(A_u)$ \queue\ message w.r.t. the lexicographical ordering;\\
%{\bf If} $|I_u|>1$ {\bf then}\\
%{\Indp
%{\bf If} $UID_u\in A_u$ for a {\queue} message and its $r_u$ is {\em $j$th\_Smallest($I_u$)} for $0<j<\alpha\leq T$ {\bf then}\\
%{\Indp
%$t \leftarrow$ $UID_p\in A_u$ of a {\queue} message with $r_p = $ {\em $(\alpha+1)$th\_Smallest($I_u$)};\\
$succ_u \leftarrow t$;\\
%{\bf generate} {\sl cancel} message $m=\langle {\sl cancel}, t\rangle$;\\
%$C_u \leftarrow C_u \bigcup \{m\}$;\\
}
%{\bf If} $UID_u\in A_u$ for a {\queue} message and its $r_u$ is  {\em $\alpha$th\_Smallest($I_u$)} {\bf then}\\
%{\Indp
%$succ_u \leftarrow \bot$;\\
%}
%{\bf If} $|I_u| == 1$ {\bf then}\\
%{\Indp
%{\bf If} $UID_u\in A_u$ and $UID_u$ is  {\em $j$th\_Smallest($A_u$)} for  $0<j<T$ {\bf then}\\
%{\Indp
%$t \leftarrow$ {\em $(j+1)$th\_Smallest($A_u$)};\\
%$succ_u \leftarrow t$;\\
%{\bf generate} {\sl cancel} message $m=\langle {\sl cancel}, t\rangle$;\\
%$C_u \leftarrow C_u \bigcup \{t\}$;\\
%}
{\bf If} a \queue\ request $q\in A_u$ is $\gamma th\_Smallest(A_u)$, $\gamma>1$, and the UID of $q$ is equal to the UID of $u$ %$UID_u\in A_u$ and $UID_u$ is  {\em $T$th\_Smallest($A_u$)}
{\bf then} \\
{\Indp
$succ_u \leftarrow \bot$;\\
}
$A_u\leftarrow A_u$ after removing $\gamma$ smallest \queue\ messages from $A_u$;\\ %\backslash C_u$ w.r.t. UIDs;\\
}}
\caption{A queuing algorithm run by node $u$}
\label{algorithm:queue2Tinterval}
\end{algorithm}

Algorithm \ref{algorithm:queue2Tinterval} consists of $\lceil k/\gamma\rceil$ cycles. % based on the value of $\alpha$. % defined in the previous paragraph (Line 3 of Algorithm \ref{algorithm:queue2Tinterval}). % and each consists of two phases: the search phase and the cancelation phase, similar to Algorithm \ref{algorithm:queue1interval}.
In contrast to Algorithm \ref{algorithm:queue1interval}, we do not need cancelation phase in this algorithm as $\gamma$ smallest \queue\ requests can be queued after $O(n)$ rounds and then corresponding \queue\ requests that are replicated to other nodes can be implicitly canceled.
Moreover, each cycle consists of $\lceil n/T\rceil$ periods of $2T$ rounds each, i.e., there are total $2n$ rounds in each cycle (Lines 4, 5 of Algorithm \ref{algorithm:queue2Tinterval}). During each period, each node $u$ maintains the set $A_u$ of \queue\ messages it has already learned and a set $S_u$ of \queue\ messages it has already broadcasted in the current period. $S_u$ is initially empty and it is made empty after each period $\eta$. %Moreover, $C_u$ denotes the {\cancel} requests already learned by each node $u\in V$. $C_u$ is initially empty and it is made empty in every phase $\ell$. %In each round of the period, each node $u\in V$ broadcasts the smallest {\queue} message with respect to the lexicographical ordering from $A_u$ it has not yet broadcasted in the current period, then adds that {\queue} request to the set $S_u$.

The main idea behind Algorithm \ref{algorithm:queue2Tinterval} is to be able to enqueue $\gamma$ {\queue} requests from $A_u$ in $O(n)$ rounds. We exploit the $T$-interval connectivity and the concurrent level parameter $\gamma$ to perform this task as follows. In each round of the period (Lines 5--11 of Algorithm \ref{algorithm:queue2Tinterval}), each node $u\in V$ selects the smallest \queue\ message $q_{\min}$ that is in $A_u\backslash S_u$ with respect to the lexicographical ordering based on the initiation round and the UID of the \queue\ request issuing node (Line 7 of Algorithm \ref{algorithm:queue2Tinterval}). The node $u$ then broadcasts $q_{\min}$ to its neighbors and adds $q_{\min}$ to the set $S_u$ (Lines 8, 9 of Algorithm \ref{algorithm:queue2Tinterval}).
%
%the smallest token $t$ among the set $A_i\backslash S_i$ to its neighbors at that round and add that $t$ to $S_i$.
As a stable connected subgraph $G_\eta$ persists for each period, we can always send in a round of the period the token that was not already broadcasted. As $G_\eta$ changes in the next period, wet set $S_u$ (the set of {\queue} requests already broadcasted by node $u$) to $\emptyset$ (Line 12 of Algorithm \ref{algorithm:queue2Tinterval}) and start broadcasting similarly in the next round. This is to make sure that the neighboring nodes of $u$ in the new connected graph $G_\eta'$ receive the tokens that were received by the neighboring nodes in the previous period.
After repeating this process for $\lceil n/T\rceil$ periods, we check the local variable $succ_u$ of each node $u \in G$ to see whether $succ_u$ is $\bot$. If $succ_u==\bot$ for some node $u$, then this must be the tail node of $\bQ$ that was formed in previous cycle, so we select the smallest \queue\ message $q_{\min}$ from $A_u$ %using the lexicographical ordering
and
assign the UID $t$ associated with $q_{\min}$ to $succ_u$, i.e. $succ_u \leftarrow t$ (Lines 13--15 of Algorithm \ref{algorithm:queue2Tinterval}). %After that we generate a corresponding {\cancel} message $m=\langle cancel, t\rangle$ to remove $q_{\min}$ that is pending at other nodes of the dynamic graph (Line 17 of Algorithm \ref{algorithm:queue2Tinterval}). We then add $m$ in the set $C_u$ that will be useful later in removing the cancel messages from  of the requests that signifies the cancel requests to broadcast in the cancelation phase (Lines 15--19 of Algorithm \ref{algorithm:queue2Tinterval}).

To complete the queuing of $\gamma$ \queue\ requests in a cycle, we perform the following before next cycle begins. If some node $u$ issued a \queue\ request $q$ such that $q$ is the $j$th smallest request in $A_u$ for $1<j<\gamma$ and the UID of $q$ is equal to the UID of a node $u\in V$, then we set $succ_u \leftarrow t$, where $t$ is the UID of the $(j+1)$th smallest request in $A_u$ (Lines 16--18 of Algorithm \ref{algorithm:queue2Tinterval}). This is also determined based on the lexicographical ordering on initiation time and UIDs associated with the requests in $A_u$. After that, $succ_u$ is set to $\bot$ for the $\gamma$th smallest \queue\ request issuing node (Lines 19, 20 of Algorithm \ref{algorithm:queue2Tinterval}). At the end of each cycle, we remove all the $\gamma$ requests that joined $\bQ$ so that only remaining requests compete to join $\bQ$ in the next cycle (Line 21 of Algorithm \ref{algorithm:queue2Tinterval}).

\subsection{Analysis}
\label{subsection:analysis-2T}

Similar to Algorithm \ref{algorithm:queue1interval}, we first establish progress and correctness properties of Algorithm \ref{algorithm:queue2Tinterval}. We consider the execution of continuous arrival of \queue\ requests (i.e., $k\rightarrow \infty$) similar to Lemma \ref{lemma:enqueuedonce1I}.

\begin{lemma}
\label{lemma:alpha-enqueue}
If there are $\beta_\ell$ active \queue\ requests in the beginning of any cycle $\ell$ in a dynamic execution, Algorithm \ref{algorithm:queue2Tinterval} guarantees that they will be enqueued in $\bQ$ in next at most $O\left(n+\frac{n\beta_\ell}{\min\{\beta_\ell,T\}}\right)$ rounds.
\end{lemma}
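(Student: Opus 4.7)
The plan is to mirror the structure of the proof of Lemma \ref{lemma:onequeueeveryOn}, but leverage the $T$-interval connectivity so that $\gamma_\ell := \min\{\beta_\ell, T\}$ queue requests can be batch-enqueued per cycle of $O(n)$ rounds, rather than just one. The target bound $O\bigl(n+\frac{n\beta_\ell}{\min\{\beta_\ell,T\}}\bigr)$ naturally decomposes as (cycle length) $\times$ (number of cycles), where each cycle is $\lceil n/T\rceil\cdot 2T = O(n)$ rounds and the number of cycles needed to clear $\bB_\ell$ is $\lceil \beta_\ell / \gamma_\ell\rceil = O(1 + \beta_\ell/\gamma_\ell)$. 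So the proof reduces to two things: (i) show that every cycle drains exactly $\gamma_\ell$ of the smallest active requests into $\bQ$, and (ii) count cycles.

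First I would fix a cycle $\ell$ and consider the $\gamma_\ell$ smallest active \queue\ requests $q_1 < q_2 < \ldots < q_{\gamma_\ell}$ (in the lexicographic order on initiation round, then UID) in $\bB_\ell$. The key dissemination claim is: after the $\lceil n/T\rceil$ periods of $2T$ rounds in the cycle, every node $u$ has $\{q_1,\ldots,q_{\gamma_\ell}\}\subseteq A_u$. I would prove this in the same style as Kuhn--Lynch--Oshman's token dissemination analysis: within a single period $\eta$, there is a stably persisting connected spanning subgraph $G_\eta$ (from $T$-interval connectivity, together with the fact that $S_u$ is reset at the start of each period so only fresh messages are broadcast). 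Applying a cut argument on $G_\eta$ to the set of nodes that already know $q_j$ versus those that do not, I would argue by induction on $j$ that at least $T$ new nodes learn one of the $\gamma_\ell$ smallest messages per period: in each of the $2T$ rounds of the period, each informed border node broadcasts the smallest element of $A_u\setminus S_u$, which is always one of $q_1,\ldots,q_{\gamma_\ell}$ because those are the globally smallest. Over $\lceil n/T\rceil$ such periods, at least $n$ nodes are reached, so dissemination completes within the cycle.

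Given dissemination, I would verify that the successor assignments at the end of the cycle (Lines 13--20) correctly splice $q_1,\ldots,q_{\gamma_\ell}$ as a contiguous block onto the tail of $\bQ$: the unique node $u_{\mathrm{tail}}$ with $succ_{u_{\mathrm{tail}}}=\bot$ sets its successor to the issuer of $q_1$; the issuer of $q_j$ ($1<j<\gamma_\ell$) sets its successor to the issuer of $q_{j+1}$; and the issuer of $q_{\gamma_\ell}$ sets its successor to $\bot$, becoming the new tail. Finally removing the $\gamma_\ell$ smallest entries from every $A_u$ (Line 21) implicitly cancels all replicas, so the same requests cannot be enqueued again in a later cycle (this step also replaces the cancelation phase of Algorithm \ref{algorithm:queue1interval}). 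Consequently, exactly $\gamma_\ell$ of $\bB_\ell$ are enqueued per cycle.

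For the cycle count, I would note that even after one cycle the remaining active set has $\beta_\ell - \gamma_\ell$ requests, so $\lceil \beta_\ell/\gamma_\ell\rceil$ cycles suffice. Multiplying by the per-cycle cost $2n$ gives $2n\lceil \beta_\ell/\min\{\beta_\ell,T\}\rceil = O\bigl(n + \tfrac{n\beta_\ell}{\min\{\beta_\ell,T\}}\bigr)$, which is the claimed bound; the additive $n$ absorbs the ceiling as well as the (at most) one final cycle that is not completely full. The main obstacle I expect is the dissemination step: specifically, handling the complication that $A_u$ keeps growing during the cycle with requests that may be \emph{smaller} than $q_{\gamma_\ell}$ were they not in $\bB_\ell$ but rather initiated earlier by nodes whose requests are still pending from prior cycles. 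I would resolve this by observing that, by the inductive hypothesis on prior cycles, such residual smaller requests cannot exist in $A_u$ at the start of cycle $\ell$, and any request initiated during cycle $\ell$ itself is lexicographically larger than every request in $\bB_\ell$ (because it has a strictly later initiation round), so the ``$\gamma_\ell$ smallest'' are globally the smallest throughout the cycle. This closes the induction and completes the proof.
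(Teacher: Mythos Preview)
Your proposal is correct and follows essentially the same approach as the paper's proof: both use the Kuhn--Lynch--Oshman pipelined-dissemination argument on the stable subgraph $G_\eta$ of each period to show that the $\min\{\beta_\ell,T\}$ lexicographically smallest active requests reach every node within one cycle of $2n$ rounds, then invoke the no-overtaking property (requests initiated during cycle $\ell$ have strictly larger initiation rounds) and multiply by the cycle count $\lceil\beta_\ell/\min\{\beta_\ell,T\}\rceil$. One small sharpening: your per-period claim should say that \emph{each} of the $\gamma_\ell$ smallest requests reaches at least $T$ new nodes per period (this is what the paper establishes via the sets $N_\eta^T(q)$ and the invariant that either $q\in S_u^\eta(r+1)$ or $S_u^\eta(r+1)$ already contains $r-\dist_\eta(u,q)$ smaller tokens), not merely that $T$ new nodes learn \emph{some} such request.
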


\begin{proof}
Recall that a \queue\ request that is initiated in the beginning of a cycle can join $\bQ$ after it reaches a node $x$ such that $succ_x=\bot$ at the end of a cycle, assuming that there is no other \queue\ request in the system. %As graph is $2T$-interval connected,
We know from $2T$-interval connectivity of the graph that there is a stable connected subgraph $G_\eta$ in each period $\eta$ that does not change throughout the period of $2T$ rounds.
Therefore, through the pipelined broadcasting of the \queue\ requests in each round, if there are $\beta_\ell$ active \queue\ requests in the beginning of a cycle $\ell$, we prove here that $\min\{\beta_\ell,T\}$ \queue\ requests will reach to all the nodes in $G$ at the end of the cycle $\ell$. Therefore, if $\beta_\ell \leq T$, all the requests reach to all the nodes in $G$ at the end of that cycle, but in the case when $\beta_\ell >T$ then we need $\beta_\ell/\min\{\beta_\ell,T\}$ cycles to finish all the $\beta_\ell$ requests.

We proceed as follows similar to \cite{Kuhn2010} for each cycle $\ell$ of Algorithm \ref{algorithm:queue2Tinterval}. Let $K_\eta(q)$ denote the set of nodes that know a \queue\ request $q$ at the beginning of period $\eta$ and let $\dist_\eta(u,q)$ denote the minimum distance in $G_\eta$ between a node $u$ and any node that is in $K_\eta(q)$. Let $A_u^\eta(r)$ and  $S_u^\eta(r)$ denote the values of the local sets $A_u$ and $S_u$ of node $u$ at the beginning of round $r$ of period $\eta$. Note that the node $u$ knows a \queue\ message $q$ whenever $q\in A_u$. According to the definition of $2T$-interval connectivity,  if a round $r$ is such that $\dist_\eta(u,q)\leq r\leq 2T$, then either $q$ belongs to  $S_u^\eta(r+1)$ or $S_u(r+1)$ includes at least $r-\dist_\eta(u,q)$ {\queue} requests that are smaller than $q$ with respect to the lexicographical ordering of \queue\ requests. % \cite{Kuhn2010}. %from nodes that are initiated from nodes with UID smaller than the UID of the node that initiated $q$.
Therefore, if $r\geq \dist_\eta(u,q)$, then $r$ rounds must be enough for the node $u$ to receive the {\queue} request $q$. Moreover, if  $r\geq \dist_\eta(u,q)$ but $u$ has not received $q$, then there must be smaller {\queue} requests than $q$ from other nodes that have blocked the broadcast of request $q$ in nodes that are between $u$ and the node that initiated $q$.

Now we show that at the end of each cycle $\ell$, at least $\min\{\beta_\ell,T\}$ smallest \queue\ requests among the $\beta_\ell$ active \queue\ requests that are available in the system in the beginning of cycle $\ell$ are reached to all the nodes and then they can be enqueued in $\bQ$.
Again, we proceed similar to \cite{Kuhn2010}. Let $N_\eta^d(q):=\{u\in V|\dist_\eta(u,q)\leq d\}$ denote the set of nodes at distance at most $d$ from some node that knows $q$ at the beginning of period $\eta$ and let $q$ be one of the $\min\{\beta_\ell,T\}$ smallest {\queue} request with respect to the lexicographical ordering of \queue\ requests. We have that, for each node $u\in N_\eta^T(q)$, either $q \in  S_u^\eta(2T+1)$ or $S_u^\eta(2T+1)$ contains at least $\min\{\beta_\ell,T\}$ {\queue} requests which are smaller than $q$. As $q$ is one of the smallest {\queue} request, this is not the case that $S_u^\eta(2T+1)$ contains at least $\min\{\beta_\ell,T\}$ {\queue} requests which are smaller than $q$. Therefore, all nodes in $N_\eta^T(q)$ know \queue\ request $q$ at the end of the period $\eta$. As $G_\eta$ is connected, at each period $T$ new nodes learn $q$. % until all nodes know $q$.
Since there are no more than $n$ nodes in the network $G$ and we have $\lceil n/T\rceil$ periods, at the end of the last period, all nodes know $q$. Therefore, at least $\min\{\beta_\ell,T\}$ smallest \queue\ request will be at all nodes in $G$ at the end of each cycle $\ell$. These $\min\{\beta_\ell,T\}$ smallest \queue\ requests are then implicity enqueued in $\bQ$ before the next cycle $\ell+1$ begins (Lines 13--20 of Algorithm \ref{algorithm:queue2Tinterval}).
We have that each cycle $\ell$ consists of $\lceil n/T \rceil$ periods of $2T$ rounds each. That is, we have $2n$ rounds in a cycle.
Moreover, as we use initiation time in finding the $\min\{\beta_\ell,T\}$ smallest \queue\ requests, no quest request that is initiated during cycle $\ell$ or later overtakes the requests $\bB_i$ that are initiated up to the beginning of cycle $\ell$. Therefore, all the $\beta_\ell$ requests will be enqueued in $\bQ$ in next
at most $O\left(n+ \frac{n \beta_i}{\min\{\beta_\ell,T\}}\right)$ rounds.
\end{proof}

\begin{lemma}
\label{lemma:atmostonce-2T}
Algorithm \ref{algorithm:queue2Tinterval} enqueues each \queue\ request in $\bQ$ only once.
\end{lemma}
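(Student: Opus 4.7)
The plan is to combine the progress result of Lemma \ref{lemma:alpha-enqueue} with the local bookkeeping performed in Lines 13--21 of Algorithm \ref{algorithm:queue2Tinterval}. By Lemma \ref{lemma:alpha-enqueue}, at the end of each cycle $\ell$ the $\min\{\beta_\ell,T\}$ smallest active \queue\ requests (with respect to the lexicographic ordering on initiation round and requester UID) have been received by every node, i.e.\ they all belong to $A_u$ for every $u\in V$. Because this ordering is a total order on requests and every node sees the same prefix of it, all nodes agree on which $\gamma$ requests $q_1<q_2<\cdots<q_\gamma$ are about to be enqueued in cycle $\ell$.

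Next, I would verify that the updates in Lines 13--20 form a single chain of length $\gamma$ that extends the existing queue. Before cycle $\ell$ there is exactly one node $x$ with $succ_x=\bot$ (the current tail, maintained inductively from Corollary \ref{corollary:tailless} and the initial configuration). Line 13--15 makes $x$ point to the issuer $u_1$ of $q_1$; Line 16--18 makes each issuer $u_j$ of $q_j$ for $1<j<\gamma$ point to the issuer $u_{j+1}$ of $q_{j+1}$; and Line 19--20 sets $succ_{u_\gamma}=\bot$, making $u_\gamma$ the unique new tail. Since all issuers apply these rules against the same lexicographic order on $A_u$, these writes are consistent across nodes and produce exactly one successor chain $x\to u_1\to\cdots\to u_\gamma$; every one of $q_1,\ldots,q_\gamma$ is enqueued exactly once in this cycle.

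Finally, I would invoke Line 21, which deletes the $\gamma$ smallest \queue\ messages from $A_u$ at every node, to guarantee that these requests cannot be re-enqueued in any subsequent cycle. Combined with the fact that $succ_u$ only transitions from $\infty$ to a UID (Line 15 or Line 18) or from $\infty$ to $\bot$ (Line 20), and in either case $u$ never again has $succ_u=\bot$ until it is made the tail (which happens for at most one request per cycle), no node can serve as the destination for two distinct enqueue operations of the same request. Hence every \queue\ request is enqueued in $\bQ$ at most once, and in conjunction with Lemma \ref{lemma:alpha-enqueue} exactly once.

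The main obstacle I anticipate is the joint step of asserting that every node's local view of the $\gamma$ smallest requests coincides, so that the independent local decisions in Lines 13--20 actually compose into one globally consistent chain rather than multiple overlapping or contradicting successor assignments. This reduces to carefully applying the distance/broadcast argument behind Lemma \ref{lemma:alpha-enqueue}: any request not in the agreed-upon $\gamma$-prefix is either still propagating or strictly larger than $q_\gamma$, so it cannot cause any node to select a different successor during the bookkeeping phase of cycle $\ell$.
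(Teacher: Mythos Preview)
Your proposal is correct and follows essentially the same approach as the paper: both arguments walk through the bookkeeping in Lines 13--21, observing that the unique node with $succ=\bot$ adopts the smallest request, the chain $u_1\to\cdots\to u_\gamma$ is formed implicitly, the $\gamma$th issuer becomes the new tail, and the removal in Line 21 prevents any future re-enqueue. Your version is somewhat more explicit than the paper's in invoking Lemma \ref{lemma:alpha-enqueue} to justify that all nodes agree on the same $\gamma$-prefix (the paper leaves this implicit), and your reference to Corollary \ref{corollary:tailless} is slightly off since that corollary is stated for Algorithm \ref{algorithm:queue1interval}, but the inductive unique-tail invariant you intend is clear and correct.
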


\begin{proof}
We prove this lemma similar to Lemma \ref{lemma:enqueuedonce1I}. Recall the every node $u$ in the system initially has $succ_u=\infty$ except the head node of $\bQ$ which has $succ_{head}= \bot$. In Algorithm \ref{algorithm:queue2Tinterval}, the enqueue of $\min\{\beta_\ell,T\}$ \queue\ requests to $\bQ$ happens at the end of each cycle (Lines 13--20 of Algorithm \ref{algorithm:queue2Tinterval}). In this process, the node $u$ which has $succ_u=\bot$ changes its value from $\bot$ to $t$, where $t$ is the UID of the smallest \queue\ message in $A_u$ with respect to the lexicographical ordering of the \queue\ requests in $A_u$. After that the second smallest to $\min\{\beta_\ell,T\} - 1$ smallest \queue\ request are enqueued implicitly as given in Lines 16--18 of Algorithm \ref{algorithm:queue2Tinterval}. The local successor variable $succ_u$ of the node $u$ that issued the $\min\{\beta_\ell,T\}$th smallest \queue\ message is set to $\bot$. As all $\min\{\beta_\ell,T\}$ smallest \queue\ requests are removed from $A_u$ at the end of each cycle $\ell$, after this enqueue they can not be enqueued in the future. Therefore, in this process, each \queue\ request is enqueued in $\bQ$ only once. Moreover, Algorithm \ref{algorithm:queue2Tinterval} does not terminate until all requests in $\cE$ finished execution. Hence, the lemma follows.
\end{proof}

We now analyze the performance of Algorithm \ref{algorithm:queue2Tinterval} in sequential, concurrent, and dynamic executions.

\paragraph{Complexity in Sequential Executions:}
We show that, for the sequential execution of $k$ {\queue} requests, the distributed queuing problem needs $\Theta(nk)$ rounds to solve in the worst-case even in $T$-interval connected graphs.

\begin{theorem}
\label{theorem:sequential-2T}
In sequential executions, Algorithm \ref{algorithm:queue2Tinterval} is optimal for the distributed queuing problem in $T$-interval connected graphs against a strong adversary.
\end{theorem}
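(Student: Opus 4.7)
The plan is to prove optimality by matching an upper bound from the progress lemma for Algorithm~\ref{algorithm:queue2Tinterval} against an $\Omega(nk)$ lower bound that must hold against any algorithm in $T$-interval connected graphs. Since sequential executions are those in which at most one \queue\ request is active at any time, both halves of the argument simplify considerably once we note that $\beta_\ell=1$ in every cycle $\ell$.

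For the upper bound, I would directly invoke Lemma~\ref{lemma:alpha-enqueue} with $\beta_\ell=1$. Substituting into the progress bound gives $O\bigl(n+\frac{n\cdot 1}{\min\{1,T\}}\bigr)=O(n)$ rounds to serve the single active request that is present at the start of any cycle. Because a new \queue\ request in a sequential execution is issued only after the previous one is enqueued, the algorithm processes the $k$ requests one cycle at a time. Summing over the $k$ sequential requests yields a total round complexity of $O(nk)$.

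For the matching lower bound, I would adapt the adversarial argument used in the proof of Theorem~\ref{theorem:sequential-phase} to the more restrictive $T$-interval connected setting, where the adversary is not free to rewire every round. The key observation is that a \emph{static} line $v_1-v_2-\cdots-v_n$ is $\infty$-interval connected and hence $T$-interval connected for every $T\geq 1$, so the adversary is allowed to present this fixed topology in every round. Place the current tail of $\bQ$ at $v_n$ (so $succ_{v_n}=\bot$) and issue a \queue\ request at $v_1$. By the casual-influence bound of Lemma~\ref{lemma:unecut-more}, the state of $v_n$ at round $r$ depends only on information that has traveled at most $r$ hops from $v_1$, so $v_n$'s local variable $succ_{v_n}$ cannot change until round $n-1$; symmetrically, once $v_1$ must learn that it has been enqueued (so that its successor can later be linked), another $\Omega(n)$ rounds are needed for the back-propagation along the same line. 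Hence any distributed queuing algorithm requires $\Omega(n)$ rounds per sequential request, and amplifying over all $k$ requests in $\cE$ gives $\Omega(nk)$ rounds in the worst case.

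The main obstacle is making the lower-bound reduction apply to \emph{every} algorithm, not just ours, while respecting $T$-interval connectivity; dealing with this by using a graph that is simultaneously $T$-interval connected for all $T$ (the static line) sidesteps any tension between adversarial flexibility and stability requirements. A second, minor subtlety is arguing that the algorithm cannot ``preload'' information about the yet-unissued requests during idle rounds: this is handled by observing that in a sequential execution the system reaches a quiescent configuration between requests, so no useful state about a future source node can exist before the round in which it initiates its \queue\ request, and from that round onward the propagation lower bound applies. Combining the $O(nk)$ upper bound with the $\Omega(nk)$ lower bound yields $\Theta(nk)$, establishing optimality.
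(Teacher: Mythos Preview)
Your argument is correct and follows the same two-part structure as the paper (an $O(nk)$ upper bound plus an $\Omega(nk)$ lower bound on a line), with one pleasant simplification: where the paper's adversary re-lays the line every $T$ rounds, you observe that a \emph{static} line is already $\infty$-interval connected and hence $T$-interval connected for every $T$, so no per-$T$ rewiring is needed. For the upper bound you invoke Lemma~\ref{lemma:alpha-enqueue} with $\beta_\ell=1$, whereas the paper points back to Theorem~\ref{theorem:sequential-phase}; both routes yield $O(n)$ per request and are equally valid. One small technical slip: Lemma~\ref{lemma:unecut-more} gives a \emph{lower} bound on how many nodes are causally influenced after $r$ rounds, not the upper bound you need to say that $v_n$ cannot have heard from $v_1$ before round $n-1$. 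That upper bound is simply the one-hop-per-round property of the synchronous model (equivalently, the definition of $\rightsquigarrow_G$) applied to a path of length $n-1$; cite that directly rather than Lemma~\ref{lemma:unecut-more}.
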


\begin{proof}
Recall that \queue\ requests do not overlap with each other in sequential executions. The upper bound of $O(nk)$ is immediate from Theorem \ref{theorem:sequential-phase} as each \queue\ request is enqueued in $\bQ$ at the end of each cycle in the worst-case, irrespective of the $T$-interval connectivity. We now focus our attention to prove the lower bound of $\Omega(nk)$ in $T$-interval connected graphs. The idea of the proof is also similar the lower bound proof of Theorem \ref{theorem:sequential-phase}. As there is only one \queue\ request $q$ in the system at any time in sequential executions, the adversary can connect the nodes in a line for $T$ rounds in such a way that only one new node can learn $q$ in each round. The adversary can repeat this again for next $T$ rounds by connecting the nodes of the graph in a line, so that only other $T$ nodes can learn $q$. Therefore, $q$ needs $n$ rounds (i.e. a cycle) to reach to the tail of $\bQ$ and join it to become a new tail of $\bQ$. Repeating this argument for all the $k$ \queue\ requests in $\cE$, the lower bound follows, as needed.
\end{proof}

\paragraph{Complexity in Concurrent Executions:}
We prove the following theorem
for the performance of Algorithm \ref{algorithm:queue2Tinterval} on the concurrent (one-shot) execution of $k$ {\queue} requests. %We prove the following theorem We first show that $T$ requests will indeed be served in $O(n)$ rounds. We start with the following lemma.

\begin{theorem}
\label{theorem:concurrent-2T}
In concurrent executions, Algorithm \ref{algorithm:queue2Tinterval} requires $O(n+\frac{nk}{T})$ rounds to solve the distributed queuing problem in $T$-interval connect graphs.
\end{theorem}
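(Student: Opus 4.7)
The plan is to obtain Theorem~\ref{theorem:concurrent-2T} as a direct corollary of Lemma~\ref{lemma:alpha-enqueue}, since the concurrent one-shot setting is the special case of a dynamic execution in which every request is already active at the start of cycle~$1$ and no further requests ever arrive.

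First, I would fix notation and observe that in a one-shot concurrent execution all $k$ \queue\ requests are initiated at round $0$, so every request in $\cE$ is active at the beginning of cycle $1$; hence $\beta_1 = k$. Because no new \queue\ requests are issued later, $\beta_\ell$ is non-increasing in $\ell$, so Algorithm~\ref{algorithm:queue2Tinterval} only needs to finish the $k$ requests that are already present at the start of cycle $1$.

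Next, I would instantiate Lemma~\ref{lemma:alpha-enqueue} with $\ell = 1$ and $\beta_1 = k$. The lemma guarantees that all $\beta_1$ active requests in the beginning of cycle $1$ are enqueued in $\bQ$ within
\[
O\!\left(n + \frac{n\beta_1}{\min\{\beta_1,T\}}\right) \;=\; O\!\left(n + \frac{nk}{\min\{k,T\}}\right)
\]
rounds from the beginning of cycle $1$. A short case analysis then converts this into the bound claimed by the theorem: if $k \leq T$ then $\min\{k,T\}=k$ and the bound becomes $O(n)$, which is dominated by $O(n+nk/T)$; if $k > T$ then $\min\{k,T\}=T$ and the bound is exactly $O(n+nk/T)$. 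In either case the total round complexity is $O(n+nk/T)$.

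I do not expect any real obstacle here, since the heavy lifting (pipelined broadcast of the $\min\{\beta_\ell,T\}$ smallest requests to all nodes within $\lceil n/T\rceil$ periods of length $2T$, plus the implicit enqueue at the end of the cycle) is already encapsulated in Lemma~\ref{lemma:alpha-enqueue}. The only point that deserves an explicit sentence is why $\beta_1 = k$ in the one-shot setting and why this single application of the lemma suffices (as opposed to summing over cycles): because the lemma already bounds the time until \emph{all} of the $\beta_1$ requests present at the start of cycle $1$ are enqueued, and no other requests ever enter the system, nothing further needs to be accounted for.
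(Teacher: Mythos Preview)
Your proposal is correct and follows essentially the same route as the paper: both derive the bound directly from Lemma~\ref{lemma:alpha-enqueue} applied to the one-shot setting where all $k$ requests are active at the start of cycle~$1$. The only cosmetic difference is that the paper phrases it as ``$T$ requests per cycle, repeated $\lceil k/T\rceil$ times,'' whereas you invoke the lemma's conclusion in one shot and then do the $k\le T$ versus $k>T$ case split; the content is the same.
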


\begin{proof}
Since all the \queue\ requests in $\cE$ arrive in the system in the beginning of the first cycle, we have from Lemma \ref{lemma:alpha-enqueue} that $T$ \queue\ requests will be enqueued in $\bQ$ at the end of the first cycle.
As this needs to repeat up to $\lceil k/T\rceil$ times to make sure that all the $k$ requests joined $\bQ$, we need $O(n+\frac{nk}{T})$ rounds to to serve all $k$ \queue\ requests in $\cE$.
%From Lemma \ref{lemma:oneperiodT}, we have that $T$ smallest {\queue} request join the queue in $O(n)$ rounds. As there are $k$ {\queue} request in $\cE$, we have that we need at most $O(nk)$ rounds to serve all the requests.
\end{proof}

\paragraph{Complexity in Dynamic Executions:} We prove the following theorem for the performance of Algorithm \ref{algorithm:queue2Tinterval} in dynamic execution of $k$ queue requests.

\begin{theorem}
\label{theorem:dynamic-2T}
In dynamic executions, Algorithm \ref{algorithm:queue2Tinterval} requires $O\left(n+\frac{nk}{\min\{\alpha, T\}}\right)$ rounds to solve the distributed queuing problem in $T$-interval connected graphs.
\end{theorem}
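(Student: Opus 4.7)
The plan is to chain together the cycle-level progress bound from Lemma~\ref{lemma:alpha-enqueue} with a simple counting argument on cycles. Recall that each cycle of Algorithm~\ref{algorithm:queue2Tinterval} lasts exactly $2n$ rounds, and that, by the proof of Lemma~\ref{lemma:alpha-enqueue}, a cycle $\ell$ with $\beta_\ell$ active requests at its start enqueues at least $\min\{\beta_\ell,T\}$ of them before the next cycle begins (namely the $\min\{\beta_\ell,T\}$ requests that are smallest in the lexicographic order on initiation time and then UID).

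First I would observe that, by the definition $\alpha := \min\{\beta_1,\beta_2,\ldots\}$, every cycle $\ell$ occurring while there are still unprocessed requests satisfies $\beta_\ell \geq \alpha$, and consequently $\min\{\beta_\ell,T\} \geq \min\{\alpha,T\}$. Thus every cycle of the algorithm makes progress of at least $\min\{\alpha,T\}$ enqueues. Because Algorithm~\ref{algorithm:queue2Tinterval} orders \queue\ messages lexicographically with initiation time taking priority over UID, any request that arrives in cycle $\ell$ or later cannot overtake a request already in $\bB_\ell$; hence per-cycle progress on the pool of $k$ requests accumulates cleanly across cycles without interference from dynamic arrivals.

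Second, I would bound the total number of cycles by $C \leq \lceil k/\min\{\alpha,T\}\rceil$. Multiplying by the per-cycle cost of $2n$ rounds yields a total of
\[
2n \cdot \lceil k/\min\{\alpha,T\}\rceil \;=\; O\!\left(n + \frac{nk}{\min\{\alpha,T\}}\right)
\]
rounds, where the additive $O(n)$ term absorbs the ceiling (in particular, even when $k \leq \min\{\alpha,T\}$, the algorithm still pays one full cycle of $O(n)$ rounds). Correctness, i.e. that each of the $k$ requests is enqueued in $\bQ$ exactly once, follows from Lemma~\ref{lemma:atmostonce-2T}.

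The main, and only modest, obstacle I anticipate is pinning down the dynamic arrival bookkeeping: one must verify that the lexicographic ordering prevents later arrivals from starving earlier ones and that the notion of a ``cycle'' (aligned with the algorithm's round counter rather than with individual request arrival times) is compatible with the amortized per-cycle progress argument. Both points are straightforward consequences of the structure of Algorithm~\ref{algorithm:queue2Tinterval} and the proof of Lemma~\ref{lemma:alpha-enqueue}, so the remainder of the argument is essentially a counting calculation.
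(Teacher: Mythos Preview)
Your proposal is correct and follows essentially the same approach as the paper: bound the per-cycle progress below by $\min\{\alpha,T\}$ using Lemma~\ref{lemma:alpha-enqueue}, conclude that at most $\lceil k/\min\{\alpha,T\}\rceil$ cycles of $2n$ rounds each suffice, and obtain $O(n+nk/\min\{\alpha,T\})$. Your writeup is in fact more explicit than the paper's about the non-overtaking argument and the role of the additive $O(n)$ term.
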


\begin{proof}
In a cycle $\ell$, Algorithm \ref{algorithm:queue2Tinterval} can enqueue  $\min\{\beta_\ell,T\}$ \queue\ requests that are initiated in the cycles up to the beginning of cycle $\ell$.  It can be seen from Theorem \ref{theorem:concurrent-2T} that if $\beta_\ell > T$ then the round complexity of Algorithm \ref{algorithm:queue2Tinterval} depends on the value of $T$. If $\beta_\ell < T$, Algorithm \ref{algorithm:queue2Tinterval} can not exploit the benefits of $T$-interval connectivity and only $\beta_\ell$ \queue\ requests can be enqueued in $\bQ$ at the end of each cycle. Therefore, as only $\min\{\beta_\ell,T\}$ requests can be enqueued in each cycle $\ell$ based on the concurrency level parameter $\beta_\ell$ in each cycle $\ell$, arguing similar to Theorem \ref{theorem:concurrent-2T}, we need to run Algorithm \ref{algorithm:queue2Tinterval} for at most $\lceil \frac{ k}{\min\{\alpha,T\}}\rceil$ cycles to make sure that all $k$ \queue\ requests joined $\bQ$, where $\alpha := \min\{\beta_1,\beta_2,\ldots\}$ for the value of $\beta_\ell$ in each cycle $\ell$. Thus, Algorithm \ref{algorithm:queue2Tinterval} needs  $O\left(n+\frac{nk}{\min\{\alpha, T\}}\right)$ rounds to serve all $k$ \queue\ requests in a dynamic execution of $k$ requests. % in $\cE$.
\end{proof}

Theorem \ref{theorem:dynamic-2T} subsumes the results in Theorems \ref{theorem:sequential-2T} and \ref{theorem:concurrent-2T} in the sense that the round complexity bound
of Theorem \ref{theorem:dynamic-2T} becomes $O(nk)$ as $\min\{\alpha,T\}=1$ in every round of any sequential execution and becomes $O(n+\frac{nk}{T})$ as $\min\{\alpha,T\}=T$ in every round of any concurrent execution.

We assumed in Algorithm \ref{algorithm:queue2Tinterval} that $T$ is known. If $T$ is not known then we can guess $T$ by trying all the values of $T = 1 ,2, 4, \cdots, k$. This incurs extra $\log k$ factor in the round complexity bound. Therefore, we can solve the distributed queuing problem in $O\left(\min\left\{nk, n\log k + \frac{nk\cdot \log k}{\min\{\alpha,T\}}\right\}\right)$ rounds in any execution.

\section{An Inherent Limitation}
\label{section:limitation}

We discuss here why algorithms designed for the distributed queuing problem in the adversarial dynamic graph model, including Algorithms \ref{algorithm:queue1interval} and \ref{algorithm:queue2Tinterval}, need to perform $n$ consecutive rounds of message broadcasts before they enqueue some \queue\ requests in the distributed queue $\bQ$. In other words, we argue why we used explicit cycles of $n$ consecutive rounds for message broadcasts in our algorithms before we decide to enqueue any \queue\ request in $\bQ$.  Our argument is under the assumption that the queue $\bQ$ formed by any queuing algorithm needs to
ensure the following two properties which together provide the {\em correctness} of the distributed queue formed.

\begin{enumerate}
\item  Each \queue\ request is eventually enqueued in $\bQ$ after it is issued.  This guarantees that
    no \queue\ request is canceled (or removed) from the system without being enqueued in $\bQ$, after it is issued.

\item Each \queue\ request is enqueued in $\bQ$ exactly once. This property guarantees that no \queue\ request is enqueued in $\bQ$ more than one time.
\end{enumerate}

These two properties imply that every request will be enqueued in $\bQ$ but only once. Our objective now is to present some instances of the distributed queuing problem where it is difficult to satisfy these two properties simultaneously if we allow any algorithm for this problem to enqueue some queue requests in $\bQ$ within $o(n)$ rounds of message broadcasts after the last enqueue by
%\queue\ request(s) by
that algorithm. %  in $\bQ$.
In particular, we present two instances of the distributed queuing problem.
We consider the dynamic execution %of \queue\ requests
%in the system
in 1-interval connected graphs in this discussion; recall that \queue\ requests are initiated in arbitrary moments of time in a dynamic execution.

We start with the first instance where we try to satisfy the second property %by a queuing algorithm
from which the first property is violated.
Let the queuing algorithm that we consider in this discussion allows the tail node $p$ in $\bQ$ enqueue a \queue\ request $q$ from any node $v$ as soon as it receives $q$.
Consider an execution instance in which some node $u$ that issued a \queue\ request $q$ in some round $i-t, t \leq o(n),$ reached the current tail node $p$ (with the local successor variable $succ_p=\bot$) at round $i$ such that $p$ can now made $u$ its successor (the new tail of $\bQ$), %according to the algorithm,
that is $succ_p=u$. %This is because, according to our execution setting, the algorithm allows an enqueue, if possible, as soon as it receives the first \queue\ request.
%needs to  we allow the algorithm to enqueue as soon as it sees the  that $p$ immediately enqueues $q$ in $\bQ$ at round $i$. % and sends cancel message to $u$ which arrives in the next round to $q$. Moreover
In $t$ consecutive rounds of message broadcasting $q$ might also have been replicated to some other nodes in the network because \queue\ message replication is necessary (Theorem \ref{theorem:impossibility}) to solve the queuing problem.
As $q$ is already enqueued in $\bQ$, to satisfy the second property so that it will not be enqueued in $\bQ$ more than once, $q$ has to be removed from those nodes so that it will not be enqueued again in $\bQ$. % such that soundness property will be satisfied (i.e., $q$ will not be enqueued in $\bQ$ more than once).
As nodes have no global information, the nodes where $q$ still exists
need to rely on removing either the largest or the smallest \queue\ message using some ordering mechanism (e.g., UIDs of \queue\ request issuing nodes, initiation times, or the combination of both) from the set of requests that are at those nodes at round $i$. Lets assume that, at round $i$, in two nodes $u'$ and $u''$ of the graph $G(i)$, $q' (\neq q)$ is the smallest \queue\ request as $q$ has not yet been reached to $u'$ and $u''$, and $q$ is the smallest \queue\ request in all the remaining nodes of the graph. %except $u'$ where $q'$ is the smallest request.
Now when a queuing algorithm uses the technique to remove the smallest \queue\ request from all the network nodes, $q'$ will be removed from $u'$ and $u''$ which was not yet enqueued in $\bQ$ and $q$ will be removed from rest of the nodes in the graph, so that there is no possibility that $q$ will be enqueued twice in $\bQ$, satisfying the second property. But, this violates the first property because some other request was removed from the system before it has been enqueued in $\bQ$.
However, if the algorithm would have allowed $t=n$ rounds of message broadcasts before it enqueue $q$, $q$ would have been the smallest request in all the nodes in the graph and both properties would have been satisfied.
As the graph is controlled by a strong adversary, sending the acknowledgement messages to remove the particular requests from the nodes also need $\frac{n-2}{2}+1$ rounds in the worst-case as adversary can give very bad graph in every round (Theorem \ref{theorem:sequential-phase}), forcing the acknowledgement to reach one of the required nodes after $\frac{n-2}{2}+1$ rounds.
%This implies that in some instances of the queuing problem starvation-freedom can not be obtained if we try to enforce soundness by enqueuing the \queue\ request(s) in $\bQ$ in $o(n)$ rounds.

We discuss now the second execution instance where we try to satisfy the first property from which the second property is violated. Consider the above mentioned execution instance and assume that $p$ does not try to remove $q$ immediately. Instead $p$ tries to send acknowledgement (cancel) messages to nodes where $q$ has been replicated. Suppose an acknowledgement message is reached to $u$ at round $i+s$, where $s\leq o(n)$, and some other queue request $q''$ from node $w$ that was at $u$ became the new tail of $\bQ$. Now $u$ issues an acknowledgement message for $q''$. As $s$ is very small, the acknowledge message for $q$ (from $p$) may not have been reached already to all the nodes where $q$ still exists. Let $w$ be the node where $q$ is the only request that it is has. Let, at round $i+s+1$, acknowledgement message from $u$ reached $w$ ($w$ and $u$ happened to be the neighbors in the graph $G(i+s+1)$ given by the adversary); which in turn forces $w$ to make $u$ its successor. This violates the second property as $q$ is enqueued twice in $\bQ$. We summarize our discussion in the following observation which shows that there are some execution instances of the distributed queuing problem where messages broadcast for at least $\frac{n-2}{2}+1$ consecutive rounds is needed for any algorithm before enqueuing any \queue\ request in $\bQ$, in the worst-case.

\begin{observation}
There are execution instances of the distributed queuing problem for which $\Theta(n)$ consecutive rounds of message broadcasts by the graph nodes is necessary and sufficient for any algorithm before it enqueues any \queue\ request(s) in a distributed queue $\bQ$ so that $\bQ$ that is formed from the execution of the \queue\ requests in the system is {\em correct} -- each \queue\ request is eventually enqueued in $\bQ$ and no \queue\ request is enqueued in $\bQ$ more than once.  % it is impossible to design an algorithm that runs in less than $\Omega(nk)$ rounds and satisfies both starvation-freedom and correctness properties.
\end{observation}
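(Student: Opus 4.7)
The plan is to prove the two directions separately. The upper bound of $O(n)$ consecutive rounds per enqueue batch is immediate from Algorithms \ref{algorithm:queue1interval} and \ref{algorithm:queue2Tinterval}: both use cycles of exactly $2n$ rounds at the end of which one or more \queue\ requests are enqueued, and Lemmas \ref{lemma:enqueuedonce1I} and \ref{lemma:atmostonce-2T} have already established that both correctness properties are preserved. So sufficiency needs no new argument beyond pointing at the existing algorithms.

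For the lower bound, I would argue by contradiction. Suppose some algorithm $\mathcal{A}$ enqueues a \queue\ request $q$ into $\bQ$ after only $t \le \frac{n-2}{2}$ rounds of broadcasts since the previous enqueue (or since $q$ was issued). I will construct two adversarial schedules in a 1-interval connected graph, corresponding to the two scenarios already sketched before the observation, and show that at least one of the two correctness properties must fail regardless of how $\mathcal{A}$ chooses to handle the replicas of $q$ that necessarily exist (by Theorem \ref{theorem:impossibility}).

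In the first schedule the adversary plays a line-like graph so that in $t$ rounds $q$ has reached only $O(t) = o(n)$ nodes, while a strictly smaller pending request $q'$ still sits at some node $u'$ that has not yet seen $q$. Any global rule $\mathcal{A}$ uses to remove the stale replicas of $q$, based on an ordering by UID, initiation round, or any combination of local information, acts identically at every node with the same local view. The local view at $u'$ is indistinguishable from the view of a node whose smallest pending request is itself the replica that must be cleaned, so the rule erases $q'$ from $u'$ even though $q'$ was never enqueued, violating the first correctness property. In the second schedule $\mathcal{A}$ instead enqueues $q$ right away and sends cancel acknowledgements lazily; the adversary schedules the edges so that the cancel for $q$ is still propagating when a later enqueue of some $q''$ triggers its own cancel, then abruptly connects the holder $w$ of the surviving copy of $q$ to the current tail, forcing $\mathcal{A}$ to enqueue $q$ a second time and violating the second property. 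In both schedules, Lemma \ref{lemma:unecut-more} guarantees that $o(n)$ rounds do not suffice for information about $q$ or for its cancel to reach all the nodes, which is exactly what enables the adversary to keep $\mathcal{A}$ uninformed.

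The main obstacle will be Case~1, since a clever $\mathcal{A}$ may try a cleanup rule that depends on the full multiset of messages a node has ever seen, not just on a single ordering. I would handle this by an indistinguishability argument: in $o(n)$ rounds under a strong adversary on a 1-interval connected graph, by Lemma \ref{lemma:unecut-more} the set of nodes causally influenced by the enqueueing tail has size $o(n)$, so one can always find a node whose entire causal past is consistent with an alternate execution in which $q'$ plays the role that $q$ plays in the real execution. Any deterministic rule applied uniformly across such indistinguishable nodes therefore deletes a genuine pending request somewhere. Combining the two cases with the $O(n)$ upper bound delivered by Algorithms \ref{algorithm:queue1interval} and \ref{algorithm:queue2Tinterval} yields the claimed $\Theta(n)$ bound on the number of consecutive broadcast rounds any correct algorithm must interpose between successive enqueue events.
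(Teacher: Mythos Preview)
Your proposal follows essentially the same two-case dichotomy as the paper's own argument in Section~\ref{section:limitation}: either the algorithm cleans up replicas of $q$ via a local ordering-based rule (your Case~1, the paper's first instance) or it defers to explicit cancel/acknowledgement messages (your Case~2, the paper's second instance), and in each case the adversary exploits the $o(n)$-round window to break one of the two correctness properties. The sufficiency direction is handled identically in both, by pointing at the $2n$-round cycles of Algorithms~\ref{algorithm:queue1interval} and~\ref{algorithm:queue2Tinterval}.

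The one place you go beyond the paper is the indistinguishability argument you sketch for Case~1. The paper simply asserts that a node using an ordering-based removal rule may delete a live request $q'$; you instead argue that, because in $o(n)$ rounds the causal-influence set of the enqueueing tail has size $o(n)$ (Lemma~\ref{lemma:unecut-more}), there must exist a node whose entire causal past is consistent with an alternate execution in which $q'$ plays the role of $q$, so any deterministic local rule is forced to treat them symmetrically. This is a genuine refinement: it covers cleanup rules that depend on arbitrary functions of a node's local history, not just a single fixed ordering, and it makes explicit why the adversary's power in 1-interval connected graphs is what drives the bound. That said, neither your argument nor the paper's claims to exhaustively cover \emph{all} conceivable algorithmic strategies (e.g., randomized protocols, or protocols that interleave partial enqueues with partial cancellations in more elaborate ways); both are best read as establishing the observation for the natural class of store-and-forward algorithms that the paper considers, rather than as an unconditional lower bound.
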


\section{Discussion}
\label{section:discussion}

We addressed the distributed queuing problem in adversarial dynamic networks by giving two simple algorithms, one for 1-interval connected graphs and the other for $T$-interval connected graphs. These algorithms work in sequential, concurrent, and dynamic execution instances of the problem.
Our solutions for 1-interval connected graphs can be easily extended to solve this problem in $O(\frac{nk}{c})$ rounds in {\em $c$-vertex connected graphs} for some $c>1$ $-$  we say that a dynamic network $G=(V,E)$ is always $c$-vertex connected if and only if $G(r)$ is $c$-vertex connected for every round $r$, i.e. each node is connected to every $c$ other nodes \cite{Haeupler2012}.
Our results and the discussion in Section \ref{section:limitation} suggest that, in the worst-case, algorithms for the distributed queuing problem need the same number of rounds required for the $k$-token dissemination problem.
Therefore, it is interesting to establish a lower bound similar to the $k$-token dissemination problem given in \cite{Kuhn2010,Dutta2013,Haeupler2012} for the distributed queuing problem in this model; finding faster queuing algorithms is another open problem.
Moreover, Busch and Tirthapura \cite{Busch2010} showed that the related problem of {\em distributed counting}\footnote{In the distributed counting problem,  processors in a distributed system increment a globally unique shared counter. Each processor in return receives the value of a counter after its increment operation took effect \cite{Wattenhofer1998,Aspnes1994}.} is harder than the distributed queuing problem in concurrent situations in static networks. Therefore, it will be very interesting to prove the similar results of \cite{Busch2010} for the distributed queuing and counting problems in this adversarial dynamic network model.

%\nocite{*}
\bibliographystyle{eptcs}
\bibliography{queue}
\end{document}